\newtheorem{theorem}{Theorem}
\newtheorem{definition}{Definition}
\newtheorem{lemma}{Lemma}
\newcommand{\Bcal}{\mathcal{B}}
\newcommand{\Rcal}{\mathcal{R}}
\newcommand{\Fcal}{\mathcal{F}}
\newcommand{\R}{\mathbb{R}}
\begin{document}

\title{A polynomial algorithm for maxmin and minmax envy-free rent division on a soft budget\footnote{Thanks to Ariel Procaccia and seminar participants at the 2019 Economic Design and Algorithms in St.\! Petersburg Workshop and UT Austin for useful comments. The results of this paper circulated previously in a paper that is now solely dedicated to the incentives issues in rent allocation with soft budgets \citep{Velez-Axiv-BudgetIncentives}. All errors are my own.}}
\date{\today}

\author{Rodrigo A. Velez\thanks{
\href{mailto:rvelezca@tamu.edu}{rvelezca@tamu.edu}; \href{https://sites.google.com/site/rodrigoavelezswebpage/home}{https://sites.google.com/site/rodrigoavelezswebpage/home}} \\\small{\textit{Department of Economics, Texas A\&M University, College Station, TX 77843}}}
\maketitle

\begin{abstract}
\begin{singlespace}
The current practice of envy-free rent division, lead by the fair allocation website Spliddit, is based on quasi-linear preferences. These preferences rule out agents' well documented financial constraints. To resolve this issue we consider piece-wise linear budget constrained preferences. These preferences admit differences in agents' marginal disutility of paying rent below and above a given reference, i.e., a soft budget. We construct a polynomial algorithm to calculate a maxmin utility envy-free allocation, and other related solutions, in this domain.
\end{singlespace}
\medskip
\begin{singlespace}

\medskip

\textit{JEL classification}:  C72, D63.
\medskip

\textit{Keywords}: market design, algorithmic game theory, equitable rent division, no-envy, quasi-linear preferences, non-linear preferences.
\end{singlespace}
\end{abstract}

\section{Introduction}

\subsection{Overview of the problem}

The envy-free rent division problem is one of the success stories of computational fair division. It addresses the allocation of rooms and payments of rent among roommates who lease an apartment or a house. The objective is to find a recommendation in which each roommate finds her assignment is at least as good as that of the other roommates.

Research on this problem first concentrated on quasi-linear environments. An early result showed the existence of a polynomial algorithm to calculate an envy-free allocation \citep{Aragones-1995-SCW}.\footnote{All algorithms we refer to and construct are polynomial in the number of agents.} More recently, \citet{Gal-et-al-2016-JACM} argued that there are compelling differences among envy-free allocations, presented evidence that these differences are perceived by agents, identified the \emph{maxmin utility envy-free allocations} as possible candidates to minimize these issues, and constructed a polynomial algorithm to calculate one of these allocations.

Passing quickly from theory to practice, the fair allocation website Spliddit \citep{GP14} implemented \citet{Gal-et-al-2016-JACM}'s approach.  Along the tens of thousands of instances in which the algorithm has been put in practice, its users have also provided a series of requests for its improvement. From this feedback, \citet{Procaccia-Velez-Yu-2018-AAAI} identified the inability of the system to handle budget constraints as its main shortcoming.

Two partial answers have been proposed so far to this issue. First, \citet{Procaccia-Velez-Yu-2018-AAAI} elicit agents' values for rooms and hard budget constraints, and construct a polynomial algorithm to determine whether an envy-free allocation satisfying the budget constraints exists. Whenever possible their algorithm returns a maxmin utility allocation constrained to the envy-free allocations that satisfy budget constraints. When no such allocation exists, so it is necessary that at least one agent pays above her stated budget constraint, \citet{Procaccia-Velez-Yu-2018-AAAI} algorithm has the limitation that it calculates an allocation that is envy-free with respect to the underlying reported quasi-linear preference. This recommendation is not informed by the agents' preferences to go over their budget. The second approach, advanced by \citet{Eshwar-et-al-2018-Arxiv}, addresses this issue. They propose to enlarge the space of admissible preferences to include piece-wise linear preferences as follows: Each agent has a reference rent (a soft budget) that determines a breaking point in the marginal disutility of paying rent (see functional form in Sec.~\ref{Section-model}). We will refer to these as \emph{budget constrained piece-wise linear} preferences. These authors construct a polynomial algorithm to calculate an envy-free allocation for a fixed finite set of possible values for the marginal disutility of paying rent above the agents' budget. The algorithm has the limitation that it loses control of the location of the constructed allocation within the envy-free set. Since some envy-free allocations are intuitively biased \citep{Gal-et-al-2016-JACM}, this approach runs the risk to make such a recommendation.

The objective of this paper is then is to study the computation of particular selections of the envy-free set with budget constrained piece-wise linear preferences. Of particular interest to us are the maxmin utility envy-free allocations.

\subsection{Our contribution}

We construct a family of polynomial algorithms, with the same fixed parameters of\linebreak \citet{Eshwar-et-al-2018-Arxiv}, that compute allocations selected by the maxmin and minmax utility and maxmin and minmax money linear selections of the envy-free set (Theorem~\ref{Th:main-takeout-complexity}). Sec.~\ref{Section-model} presents precise definitions. For the moment, it is relevant to add that the following prominent selections of the envy-free set are particular cases covered by the theorem.
\begin{itemize}
\item Maxmin utility envy-free allocation: maximizes the minimal utility across agents within the envy-free set. This selection is advocated by \citet{Gal-et-al-2016-JACM}.
\item Best envy-free allocation for a given agent: maximizes a given agent's utility within the envy-free set. This selection is minimally manipulable on several partial orders of manipulability \citep{Andersson-Ehlers-Svensson-2014-TE}.
\item Maxmin rent envy-free allocation: maximizes the minimum rent paid by any agent. Calculating such an allocation determines the existence of envy-free allocations in which no agent is compensated to receive a room (the non-negativity restriction is usually avoided in the literature).
\item Minmax rent envy-free allocation: minimizes the maximal rent paid by any agent. This selection implements the tightest uniform rent control that is compatible with no-envy.
\item An allocation that minimizes, among all envy-free allocations, the rent paid by the agent who is assigned a given room.
\item An allocation that maximizes, among all envy-free allocations, the rent paid by the agent who is assigned a given room.
\end{itemize}

\subsection{Significance for CS and Economics}

Our work contributes in two independent levels. First, we address an issue that has been identified from real-world requests in one of the most successful application of computational social choice. As such, our proposals have the potential to be deployed in practice. Second, at a technical level, we bridge two independent branches of the literature on envy-free rent division. On the one hand, we work on the algorithmic framework developed by \citet{Gal-et-al-2016-JACM} and \citet{Eshwar-et-al-2018-Arxiv}. On the other hand, we manage to exploit from a computational perspective the topological properties of the envy-free set that economists have used in unrestricted continuous economies to understand the normative and structural properties of these allocations \citep[c.f.,][]{A-D-G-1991-Eca,Velez-2016-Rent,Velez-2017-Survey}. Showing that this can be done constitutes in itself a case study of the successful interaction between these fields.

\section{Model}\label{Section-model}

A set of $n$ rooms, $A:=\{a,b,\dots\}$, is to be allocated among $n$~agents $N:=\{1,...,n\}$. Each agent is to receive a room and pay an amount of money for it. Agent~$i$ generic allotment is $(r_a,a)\in\R\times A$. When $r_a\geq 0$ we interpret this as the amount of money agent $i$ pays to receive room $a$. We allow for negative payments of rent, i.e., $r_a<0$.\footnote{It is possible to restrict preferences to a cartesian domain that guarantees payments of rent are always nonnegative at each envy-free allocation \citep{Velez-2011-JET}. Our results obviously apply to these subdomains. Most importantly, by considering the unrestricted problem, we are able to construct a polynomial algorithm that determines whether an envy-free allocation in which no agent is compensated exists for a particular preference profile.}

Each agent has preferences on bundles of rooms and payments of rent represented by a utility function of the following form. There are $(v^i_a)_{a\in A}\in\R^A$, $b_i\geq 0$, and $\rho_i\in \R_+$ such that
\[u_i(r_a,a)=v^i_a-r_a-\rho_i\max\{0,r_a-b_i\}.\]
For a finite set  $\{\rho_1,\dots,\rho_k\}\subseteq\R_+$ that we fix throughout, the space of preferences in which the coefficient $\rho_i$ belongs to this set is $\Bcal$. We assume without loss of generality that $\rho_1=0$, so $\Bcal$ contains the space of quasi-linear preferences.

Our preference domain has an intuitive structure. The agent has an underlying quasi-linear preference. However, the marginal disutility of paying rent is not uniform. The agent has a soft budget~$b_i$, i.e., an amount of money she expects to spend on housing. Paying one dollar above~$b_i$ entails a larger disutility than that of paying one dollar more still under~$b_i$. This reflects, for instance, that either the agent is budget constrained and needs to pay an interest rate on money paid above her budget, or that she needs to reallocate money from other needs like transportation or entertainment in order to pay rent over~$b_i$.

Individual payments should add up to  $m\in\R$, the house rent. An economy is described by the tuple $e:= (N,A,u,m)$.
 An allocation for $e$ is a pair $(r,\sigma)$ where $\sigma:N\rightarrow A$ is a bijection and $r:=(r_{a})_{a\in A}$ is such that $\sum_{a\in A}r_a=m$.  An allocation is \textit{envy-free for $e$} if no agent prefers the consumption of any other agent at the allocation. The set of these allocations is $F(e)$.


\section{Results}\label{Sec:Complexity}

\subsection{The main result}

For constants $a> 0$ and $b\in\R$,  we refer to the function $x\in\R\mapsto b+ax$ as a positive affine linear transformation.
We develop  a family of polynomial algorithms to calculate an allocation that maximizes (minimizes) the minimum (maximum) of  positive affine linear transformations of individual utility (or payments of rent) among all envy-free allocations when preferences belong to $\Bcal$.

\begin{theorem}\label{Th:main-takeout-complexity}Let $e:=(N,A,u,m)$ be such that $u\in\Bcal^N$. For $S\subseteq N$ and $C\subseteq A$, let $(f_{i})_{i\in S}$ and $(g_{c})_{a\in C}$ be lists of positive affine linear transformations. Then, there are polynomial algorithms with input size $n$ that compute an element in:
\begin{enumerate}
\item $\underset{(r,\sigma)\in F(e)}{\arg\max}\min_{i\in S}f_i(u_i(r_{\sigma(i)},\sigma(i)))$;
\item $\underset{(r,\sigma)\in F(e)}{\arg\min}\max_{i\in S}f_i(u_i(r_{\sigma(i)},\sigma(i)))$;
\item $\underset{(r,\mu)\in F(e)}{\arg\max}\min_{a\in C}g_a(r_a)$;
\item $\underset{(r,\mu)\in F(e)}{\arg\min}\max_{a\in C}g_a(r_a)$.
\end{enumerate}
\end{theorem}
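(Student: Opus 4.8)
The plan is to exploit the piecewise-linear structure of the preference domain $\Bcal$ to reduce each of the four optimization problems to a sequence of linear programs of polynomial size. The central observation is that for a fixed assignment $\sigma$, the envy-free constraints $u_i(r_{\sigma(i)},\sigma(i)) \geq u_i(r_{\sigma(j)},\sigma(j))$ for all $i,j \in N$ become piecewise-linear inequalities in the payment vector $r$, because each $u_i$ has a single kink at $r = b_i$. First I would establish that the envy-free set $F(e)$ is a union, over all bijections $\sigma$ and all sign patterns recording whether each agent pays above or below her soft budget, of polytopes defined by linear inequalities. Since a brute-force enumeration over the $n!$ assignments is not polynomial, the first real work is to show that one need not search over all assignments: invoking the classical structure of the envy-free set in quasi-linear economies (the references \citep{A-D-G-1991-Eca,Velez-2016-Rent}), one fixes an \emph{efficient} assignment that maximizes total value, and the envy-free payment vectors live in a single bounded polytope once the kinks are resolved.

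The key steps, in order, are as follows. First I would reduce the soft-budget preferences to a tractable form by introducing, for each candidate objective value $t$, a target level and checking feasibility of the system \lfq $f_i(u_i(r_{\sigma(i)},\sigma(i))) \geq t$ for all $i\in S$, together with the envy-free inequalities.\textquotedblright{} Because each $f_i$ is a positive affine linear transformation, these constraints preserve the piecewise-linear structure and, crucially, the monotonicity needed for a search over $t$. Second, I would handle the nonlinearity of the $\max\{0, r_a - b_i\}$ term by a case split on the location of each payment relative to the relevant budgets; the combinatorial explosion here is controlled by observing that the relevant breakpoints form a small, explicitly computable set, so only polynomially many linear programs need be solved. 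Third, for the maxmin utility problem (item 1), I would perform a binary search or a parametric-LP argument on the minimax objective value, solving at each step a feasibility LP over the envy-free polytope; the minmax problems (items 2, 4) and the rent problems (items 3, 4) follow by the same scheme applied to the appropriate linear objectives $g_a(r_a)$, which are already linear in $r$ and hence even simpler.

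The main obstacle I anticipate is twofold. The structural difficulty is controlling the interaction between the assignment choice $\sigma$ and the budget kinks: in a quasi-linear economy the set of efficient assignments is determined purely by the value matrix, but with soft budgets the ``right'' assignment can depend on the payments, so I must argue that restricting attention to value-maximizing assignments (or a polynomially-bounded family of them) loses no generality for the envy-free selections in question. The algorithmic difficulty is ensuring that the number of linear programs solved remains polynomial in $n$ rather than exponential: the naive resolution of all $n$ kinks gives $2^n$ sign patterns, so I would need a monotonicity or lattice-theoretic argument — presumably exploiting the known lattice structure of $F(e)$ — to show that the optimum is attained at a boundary configuration reachable by a polynomial sequence of LP solves. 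Establishing this polynomial bound, and verifying that each affine transformation $f_i$ and $g_a$ preserves the relevant monotonicity so that binary search on the objective value is valid, is where I expect the bulk of the technical effort to lie.
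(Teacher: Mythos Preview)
Your proposal has a genuine structural gap that the paper's approach is specifically designed to avoid. You write that one ``fixes an \emph{efficient} assignment that maximizes total value,'' importing the quasi-linear fact that a single welfare-maximizing $\sigma$ supports every envy-free payment vector. In the domain $\Bcal$ this fails: the ``right'' assignment depends on the rent vector through the budget regime $SB^u(r)$, and the paper's algorithm recomputes a maximum-weight perfect matching in the weighted indifference graph $\Fcal^u(r)$ at every iteration precisely because no single $\sigma$ works globally. You flag this as an anticipated obstacle, but your proposed resolution (restricting to a polynomially bounded family of value-maximizing assignments) is not what happens and is not obviously correct; the assignment is data-dependent in a way that your sketch does not control.

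Relatedly, your binary-search-on-$t$ strategy and your hope that a lattice argument collapses the $2^n$ sign patterns are both replaced in the paper by a different mechanism: \emph{rent monotonicity} of the selections in question. The paper does not search over objective values at all. Instead it initializes at an extreme aggregate rent (large enough that every budget is violated, so the problem is locally quasi-linear), computes the desired selection there, and then follows the monotone path $m\mapsto \Rcal(N,A,u,m)$ by alternately solving two LPs: one that rebates rent as far as possible without changing the budget regime or losing no-envy (LP~(\ref{EQ:LP1})), and a corrective one (LP~(\ref{EQ:LP2})) that restores the maxmin property if the rebate overshot. The polynomial bound comes not from a lattice argument but from two counters that move monotonically: $|SB^u(r^s)|$ can only drop (at most $n^2$ times), and the weight of the optimal matching in $\Fcal^u(r^s)$ takes at most $(n+1)^{k-1}$ values. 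The technical core is a pair of perturbation lemmas (Lemmas~\ref{Lem:Perturbation-hmaxmin} and~\ref{Lem:ConversePerturbation}) showing that a maximum-weight matching in $\Fcal^u(r)$ both permits and is forced by a small envy-free rebate; these have no analogue in your outline. For items 2 and 3 the paper reverses direction (start at low rent, increase, use minimum-weight matchings and ``left'' perturbation lemmas), which again is not a binary search.
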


We initially limit our presentation to our main application, i.e., the maxmin utility solution
\[\Rcal(N,A,u,m):=\underset{(r,\sigma)\in F(N,A,u,m)}{\arg\max}\min_{i\in N}u_i(r_{\sigma(i)},\sigma(i)).\]
We discuss the proof of the theorem in Sec.~\ref{Sec:proof-general}.

\citet{Eshwar-et-al-2018-Arxiv} introduced an algorithm that calculates an envy-free allocation when preferences are represented by piece-wise linear functions, a domain containing $\Bcal$.\footnote{A piece-wise linear utility function has the form $(r_i,a)\mapsto v_{ias}-\lambda_{ias}r_i$ for a collection of consecutive intervals $\{I_{as}^i\}_{a\in A,s\in S^i}$ that covers $\R$. Even though we  will state all our results for our domain $\Bcal$, they all generalize for the piece-wise linear domain when we require for each $i$, $|S^i|$ is polynomial in $n$ and marginal disutility of paying rent comes from a given fixed set.} When preferences are in $\Bcal$, so the number of different values of the budget violation index that an agent can report is $k$, their algorithm runs in $O(n^{k+c})$ for some $c>0$ \citep[Sec. 4.1]{Eshwar-et-al-2018-Arxiv}.\footnote{The leading algorithm in \citet{Eshwar-et-al-2018-Arxiv} solves the case in which slopes are integer powers of a given $(1+\varepsilon)$. With this they construct an approximate envy-free allocation for an arbitrary piece-wise linear economy. The approximation is polynomial in $1/\varepsilon$. Since our objective is to provide more expressive but simple preferences to the agents, we prefer to present only results for exact envy-free allocations for finitely many slopes. In this context calculating a maxmin envy-free allocation has a direct interpretation.}  Their algorithm does not produce an allocation satisfying further criteria.

Thus, the algorithms in Theorem~\ref{Th:main-takeout-complexity} significantly improve over both~\citet{Gal-et-al-2016-JACM} and\linebreak \citet{Eshwar-et-al-2018-Arxiv}'s algorithms because they apply to a non-linear domain of preferences and produce  allocations in specific selections of the envy-free set. We do pay a price in the generality of our result compared to \citet{Gal-et-al-2016-JACM}, for their algorithm applies to arbitrary linear transformations of the \emph{vector} of utilities. For instance, their algorithm can calculate an allocation that maximizes the summation of utilities for a particular group of agents among all envy-free allocations. We limit our scope to the narrower set of maxmin or minmax individual utility or individual rent selections of the envy-free set. Arguably there is no significant loss of our approach, for we still cover the selections that are actually used in practice. Furthermore, our restriction is necessary to leverage a topological property of these selections, rent monotonicity, that was never taken advantage of before from an algorithmic perspective.\footnote{Rent monotonicity is satisfied only by selections that can be written as maxmin or minmax operators of utility or rent \citep{Velez-2016-Rent}.}

At a high level, we borrow from \citet{Gal-et-al-2016-JACM} the representation of our problems as linear programs, and from \citet{Eshwar-et-al-2018-Arxiv} the strategy of first solving the problem for a rent high enough in which budget constraints are all violated and then proceeding recursively rebating amounts of rent. Our algorithm significantly differs in that we are able to maintain control of the location of our iterations with respect to the solution we want to calculate. Rent monotonicity requires that, as the rent of the house increases, one recommends allocations in which the rent of each room is higher and the welfare of each agent is lower \citep{Velez-2016-Rent}. Because of this, in the $n$-dimensional utility and rent spaces one can see each of these selections as a strictly monotone path that is parameterized by the aggregate rent to collect. For instance, the utility of an agent and the rent of each room in each element of $\Rcal(N,A,u,m)$ is the same for all the members of this set and are strictly monotone functions of $m$ \citep[see][]{Velez-2017-Survey}. Our approach is to rebate rent and stay on this path. Due to the piece-wise linearity of our domain, our algorithm may temporarily deviate from this objective. However, when this happens, we are able to increase the rent to collect and return to it without losing significant progress towards our goal. Remarkably, this non-monotone approach does stop in polynomial time.

To provide the reader clarity of the significance of our contribution, we proceed in three steps. First, we describe the economy of rebates and reshuffles, a formalism that we introduce and is essential in our construction. Then, we describe how \citet{Eshwar-et-al-2018-Arxiv}'s algorithm works. Finally, we introduce our algorithms and analysis.

\subsection{The economy of rebates and reshuffles.}

The essential step of the algorithms that we will discuss and construct is the following. Starting from an envy-free allocation, decrease the rent of each room and reshuffle rooms so no-envy is preserved. In what follows we introduce notation and provide the intuition why this is always possible for a small enough rebate, the so called Perturbation Lemma \citep{A-D-G-1991-Eca,Eshwar-et-al-2018-Arxiv}. This notation and intuition is essential to later introduce our algorithms for the more general problem.

Suppose that the vector of rents at the starting allocation is $r:=(r_a)_{a\in A}$. Since preferences are in $\Bcal$, for small  $\varepsilon>0$, the difference in utility between bundles $(r_a,a)$ and $(r_a-\varepsilon,a)$ is a linear function of $\varepsilon$. To preserve correctness of our process, we need to identify the value of these slopes and account for their eventual changes as one rebates rent.

\begin{definition}\rm For each $u\in \Bcal^N$ and $r\in\R^A$, let
\begin{enumerate}\item $SB^u(r):= \{(i,a)\in N\times A:r_a>b_i\}$;
\item\[\nu_{ia}(u,r):=\left\{\begin{array}{ll} v^i_a+\rho_ib_i&\textit{ if }(i,a)\in SB^u(r),\\v^i_a&\textrm{ otw;}\end{array}\right.\textrm{ and }\]
\[\lambda_{ia}(u,r):=\left\{\begin{array}{ll} 1+\rho_i&\textit{ if }(i,a)\in SB^u(r),\\1&\textrm{ otw.}\end{array}\right.\]

\end{enumerate}
\end{definition}
Note that for each $i\in N$ and $a\in A$, $\nu_{ia}(u,r)$ and $\lambda_{ia}(u,r)$ are such that for some $\varepsilon>0$ and each $r^\delta_a\in[r_a,r_a-\varepsilon]$, $u_i(r_a^\delta,a)=\nu_{ia}(u,r)-\lambda_{ia}(u,r)r_a^\delta$.

\medskip
Consider now a starting envy-free allocation $(r,\sigma)\in F(N,A,u,m)$. A rebate of rent is a vector  $x\in\R^A_{++}$ and a reshuffle is a bijection $\mu:N\rightarrow A$.   At rebate and reshuffle $(x,\mu)$ at~$r$, agent~$i$ receives bundle $(r_{\mu(i)}-x_{\mu(i)},\mu(i))$. Utility function~$u$ induces a utility function on rebates and reshuffles at $r$, given by $(x_a,a)\mapsto u_i(r_a-x_a,a)$.

Now, suppose that agent $i$ is indifferent between bundles $(r_a,a)$ and $(r_b,b)$. Then her preferences between rebates with these rooms are linear for a neighborhood of zero. That is, for each small enough rebate $x$, $(r_a-x_a,a)$ is at least as good as $(r_b-x_b,b)$ if and only if $\lambda_{ia}(u,r)x_a\geq \lambda_{ib}(u,r)x_b$. Note that each agent is indifferent among the best bundles in $\{(r_a,a):a\in A\}$ and this set includes her assignment at $(r,\sigma)$. Each agent may not be indifferent among all bundles in $(r,\sigma)$. Thus the economy of rebates and reshuffles may not be linear. However, if we assign linear utilities to bundles with rooms that are not in the best bundles at $(r,\mu)$, the economy becomes linear. As we will see, this greatly helps in our endeavor if we carefully choose the linearization and aim to rebate only a small amount of money.

\begin{definition}\rm Let $u\in \Bcal^N$, $r\in\R^A$ for which there is $(r,\sigma)\in F(N,A,u,m)$, and  $\Lambda>0$.

\begin{enumerate}\item For each $i\in N$ and  $a\in A$, let $\lambda_{ia}(u,r,\Lambda):=\lambda_{ia}(u,r)$  if $u_i(r_{\sigma(i)},\sigma(i))=u_i(r_a,a)$, and  $\lambda_{ia}(u,r,\Lambda):=\Lambda$ otherwise.\footnote{Note that $\lambda_{ia}(u,r,\Lambda)$ is well defined, i.e., it is invariant for the choice of $\sigma$ as long as $(r,\sigma)\in F(N,A,u,m)$.}

\item The $\Lambda$-modified rebates and reshuffles economy for $u$ at $r$ is an economy in which consumption space is $\R_{++}\times A$, and preferences are represented by:  for each $i\in N$ and $x_a>0$, $(x_a,a)\mapsto \lambda_{ia}(u,r,\Lambda)x_a$. 

\end{enumerate}
\end{definition}

Let $u\in \Bcal^N$ and $(r,\sigma)\in F(N,A,u,m)$. A $\Lambda$-modified rebates and reshuffles economy for $u$ at $r$ is isomorphic to a quasi-linear rent allocation economy in which preferences are strictly increasing in money, i.e., an economy in which the alternatives space is $\R\times A$ and agents have preferences $(y_a,a)\mapsto \hat u(y_a,a):=\log \lambda_{ia}(u,r,\Lambda)+y_a$. Thus, we can use all the power of the results for the quasi-linear domain in this economy. In particular, let $\mu$ be an assignment that maximizes $\sum_{i\in N}\log\lambda_{i\mu(i)}(u,r,\Lambda)$ and let $\varepsilon\in\R$. There is $(y,\mu)\in F(N,A,\hat u,\varepsilon)$ \citep{A-D-G-1991-Eca}. Since the economy is quasi-linear, for each $\eta\in\R$, $((y_a+\eta)_{a\in A},\mu)\in F(N,A,\hat u,\varepsilon+n\eta)$. Thus, by the Intermediate Value Theorem, for each $\delta>0$ we can construct an allocation in the linearized rebates and reshuffles economy $(y,\mu)\in F(N,A,\hat u,\sum_{a\in A}y_a)$ such that $\sum_{a\in A}\exp(y_a)=\delta$.

Our aim is to guarantee that $(r-\exp(y),\mu)\in F(N,A,u,m-\delta)$. Clearly, this is so if we guarantee two conditions: $(i)$ $\mu$ assigns each agent one object in the best bundles at $(r,\sigma)$; and $(ii)$ $\delta$ is small enough so for each $i\in N$, if $u_i(r_{\sigma(i)},\sigma(i))>u_i(r_{a},a)$, then $u_i(r_{\sigma(i)},\sigma(i))>u_i(r_{a}-\delta,a)$.

Continuity of preferences guarantees that $\delta$ can be chosen so $(ii)$ holds. The following lemma states that $\Lambda$ can be chosen small enough so $(i)$ holds.

\begin{lemma}\label{Lm:boundLambda} Let $u\in \Bcal^N$ and $r\in\R^A$ for which there is $(r,\sigma)\in F(N,A,u,m)$. There is  $\Lambda>0$, such that for each $0<\Lambda'\leq\Lambda$, if $\mu$ is a solution to
\[\max_{\gamma:N\rightarrow A,\ \gamma\textrm{ a bijection}}\sum_{i\in N}\log(\lambda_{i\gamma(i)}(u,r,\Lambda')) ,\]
then for each $i\in N$ and each $\sigma$ for which $(r,\sigma)\in F(N,A,u,m)$, $u_i(r_{\mu(i)},\mu(i))=u_i(r_{\sigma(i)},\sigma(i))$.
\end{lemma}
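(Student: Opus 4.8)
The plan is to show that, once $\Lambda'$ is small enough, the penalty $\log\Lambda'$ attached to every ``wrong'' edge forces any maximum-weight assignment to match each agent to one of her utility-maximizing rooms at rent $r$. I first fix notation for those rooms: for each $i\in N$ set
\[B_i:=\Bigl\{a\in A: u_i(r_a,a)=\max_{c\in A}u_i(r_c,c)\Bigr\}.\]
By construction $B_i$ is exactly the set of rooms $a$ for which $\lambda_{ia}(u,r,\Lambda')=\lambda_{ia}(u,r)$, while $\lambda_{ia}(u,r,\Lambda')=\Lambda'$ for $a\notin B_i$. The key preliminary observation is that $B_i$ and the common value $\max_{c}u_i(r_c,c)$ do not depend on the choice of envy-free $\sigma$: if $(r,\sigma)\in F(N,A,u,m)$, then no-envy gives $u_i(r_{\sigma(i)},\sigma(i))\geq u_i(r_a,a)$ for every $a\in A$, so $\sigma(i)\in B_i$ and $u_i(r_{\sigma(i)},\sigma(i))$ equals this maximum. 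Hence proving that $\mu(i)\in B_i$ for every $i$ already yields the stated conclusion $u_i(r_{\mu(i)},\mu(i))=u_i(r_{\sigma(i)},\sigma(i))$ for every such $\sigma$ simultaneously.

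Next I would record two facts about the objective $\sum_{i}\log\lambda_{i\gamma(i)}(u,r,\Lambda')$. First, a perfect matching using only ``best'' edges exists, since $\sigma$ itself is one; write $M^{*}$ for the maximum of the objective over all bijections $\gamma$ with $\gamma(i)\in B_i$ for every $i$. As each such edge has weight $\log\lambda_{i\gamma(i)}(u,r)\in\{0,\log(1+\rho_i)\}$, which is nonnegative, we get $M^{*}\geq 0$; and because $M^{*}$ is computed purely from the fixed weights $\lambda_{ia}(u,r)$, it does not depend on $\Lambda'$. Second, any bijection $\gamma$ that assigns some agent a room outside her $B_i$ uses at least one edge of weight $\log\Lambda'$, while its remaining edges contribute at most $W:=\sum_{i\in N}\log(1+\rho_i)$; hence its objective value is at most $\log\Lambda'+W$.

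I would then close the argument by choosing $\Lambda$ small. Take any $\Lambda>0$ with $\log\Lambda+W<0$, for instance $\Lambda<\exp(-W)$. For every $0<\Lambda'\leq\Lambda$ and every bijection $\gamma$ that steps outside the best-edge graph, the bound above gives objective value at most $\log\Lambda'+W\leq\log\Lambda+W<0\leq M^{*}$, strictly below the value attained by the best-edge matching realizing $M^{*}$. Consequently no such $\gamma$ can be a maximizer, so every solution $\mu$ of the maximization satisfies $\mu(i)\in B_i$ for all $i$, which is exactly what is required.

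I do not expect a genuine obstacle here: at heart this is a large-penalty (``big-$M$'') argument. The two points that require care, and which I would make explicit, are that the best-edge weights—and therefore both $M^{*}$ and $W$—are independent of $\Lambda'$, so that a single threshold $\Lambda$ works uniformly for all $0<\Lambda'\leq\Lambda$, and that the sets $B_i$ together with the maximal utility values are invariant across the envy-free assignments $\sigma$, which is precisely what lets the conclusion quantify over every such $\sigma$.
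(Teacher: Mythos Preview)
Your proof is correct and follows the same ``big-$M$'' penalty argument as the paper: choose $\Lambda$ small enough that any matching using a non-best edge has total log-weight strictly below that of a matching confined to best edges (the paper phrases the threshold as $\log\Lambda+(n-1)\max_{i,a}\log\lambda_{ia}<\sum_{i}\log\lambda_{i\sigma(i)}$, whereas you use $\log\Lambda+W<0\le M^{*}$). Your write-up is more explicit about why the best-edge weights and the sets $B_i$ are independent of $\Lambda'$ and of the particular envy-free $\sigma$, which are exactly the points the paper leaves implicit.
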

Proofs omitted in the body of the paper are presented in the Appendix.

The following graphs allow us to summarize the discussion above in one lemma.\footnote{Lemma~\ref{Lem:Perturbation} is a generalization of the Perturbation Lemma in \cite{A-D-G-1991-Eca} that determines the existence, but does not identify an assignment with which the rebate and reshuffle can proceed.}

\begin{definition}[\citealp{Eshwar-et-al-2018-Arxiv}]\rm For $u\in\Bcal^N$ and $r\in \R^A$ for which there is $(r,\sigma)\in F(N,A,u,m)$, let $\Fcal(r):=(N,A,E)$ be the bipartite graph were $(i,a)\in E$ if $u_i(r_{\sigma(i)},\sigma(i))=u_i(r_a,a)$; and $\Fcal^u(r)$ the weighted version of $\Fcal(r)$ were for each $(i,a)\in E$, $w(i,a):= \log \lambda_{ia}(u,r)$.
\end{definition}

\begin{lemma}[Perturbation Lemma; \citealp{Eshwar-et-al-2018-Arxiv}]\label{Lem:Perturbation} Let $(r,\sigma)\in F(N,A,u,m)$. Suppose that  $u$ is piece-wise linear and $\mu$ is a maximum weight perfect matching in $\Fcal^u(r)$. Then, there is $\varepsilon>0$  such that for each $\delta\in[0,\varepsilon]$, there is $(r^\delta,\mu)\in F(N,A,u,m-\delta)$ such that for each $a\in A$, $r^\delta_a<r_a$.\footnote{The discussion before the statement of the lemma is almost a detailed proof of it. More formally, Lemma~\ref{Lem:Perturbation-hmaxmin}, which we prove in detail in the Appendix, implies Lemma~\ref{Lem:Perturbation}.}
\end{lemma}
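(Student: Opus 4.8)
The plan is to assemble the Perturbation Lemma directly from the three ingredients already laid out in the preceding discussion, treating the displayed argument as the skeleton and supplying only the glue. The statement asserts that once we fix a maximum weight perfect matching $\mu$ in $\Fcal^u(r)$, we can rebate a small positive amount of total rent, reshuffle according to $\mu$, and land in an envy-free allocation for the lower house rent $m-\delta$ with every individual rent strictly decreased. The three things I need are: (a) that $\mu$ assigns each agent a room among her best bundles at $(r,\sigma)$; (b) a linearization of the rebate economy in which $\mu$ is an optimal quasi-linear assignment; and (c) a continuity bound ensuring that small rebates do not create new envy among bundles that were strictly worse.

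First I would invoke Lemma~\ref{Lm:boundLambda} to fix a threshold $\Lambda>0$ small enough that any maximizer of $\sum_{i\in N}\log\lambda_{i\gamma(i)}(u,r,\Lambda')$ for $\Lambda'\le\Lambda$ assigns each agent a best bundle, i.e.\ $u_i(r_{\mu(i)},\mu(i))=u_i(r_{\sigma(i)},\sigma(i))$ for every $i$. The subtlety is that the lemma is stated for the $\Lambda$-modified weights, whereas the Perturbation Lemma's $\mu$ is a maximum weight matching in $\Fcal^u(r)$, whose weights $\log\lambda_{ia}(u,r)$ are defined only on edges of $\Fcal(r)$ (the best-bundle edges). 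I would reconcile these by noting that for $\Lambda'$ smaller than every $\lambda_{ia}(u,r)$ on a best-bundle edge, the off-edge weight $\log\Lambda'$ is so negative that any matching using a non-best-bundle edge is dominated; hence the $\Lambda'$-maximizers are exactly the perfect matchings of $\Fcal(r)$ of maximum total best-bundle weight, which is precisely a maximum weight perfect matching $\mu$ of $\Fcal^u(r)$. This identifies the two notions of $\mu$ and secures condition~(i).

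Next I would carry out the construction already sketched before the statement: pass to the linearized economy with quasi-linear utilities $\hat u_i(y_a,a)=\log\lambda_{ia}(u,r,\Lambda')+y_a$, use that $\mu$ maximizes $\sum_i\log\lambda_{i\mu(i)}(u,r,\Lambda')$ so that an envy-free allocation $(y,\mu)\in F(N,A,\hat u,\sum_{a}y_a)$ exists by the quasi-linear existence result of \citet{A-D-G-1991-Eca}, and invoke the Intermediate Value Theorem on the aggregate rent (shifting all $y_a$ by a common scalar) to realize any prescribed total rebate $\sum_{a\in A}\exp(y_a)=\delta$. Setting $x:=\exp(y)$ componentwise gives a genuine rebate $x\in\R^A_{++}$, and because $\mu$ uses only best-bundle edges, envy within the best bundles is controlled exactly by the linear comparison $\lambda_{i\mu(i)}(u,r)x_{\mu(i)}\le\lambda_{ia}(u,r)x_a$ recorded in the rebate-economy discussion, which the $\hat u$-no-envy of $(y,\mu)$ guarantees for all best-bundle rooms $a$.

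Finally I would close condition~(ii) by continuity: for each agent $i$ and each room $a$ with a strict gap $u_i(r_{\sigma(i)},\sigma(i))>u_i(r_a,a)$, there is a neighborhood of $\delta=0$ on which the gap persists after any rebate of size at most $\delta$, so these strictly inferior bundles never become envied; taking $\varepsilon$ below the minimum of these finitely many continuity margins (and below the range in which the linearization $\lambda_{ia}(u,r)$ remains valid, i.e.\ no new soft-budget crossing occurs) yields the claimed $\varepsilon>0$. The main obstacle I anticipate is the first step—rigorously matching the maximum weight perfect matching $\mu$ of $\Fcal^u(r)$ to the $\Lambda'$-maximizers of Lemma~\ref{Lm:boundLambda}—since everything else is an application of established quasi-linear machinery plus a routine continuity estimate. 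The cleanest route, as the footnote hints, is to derive the Perturbation Lemma as a corollary of the sharper Lemma~\ref{Lem:Perturbation-hmaxmin} proved in the Appendix, so that I need only check the specialization rather than redo the construction.
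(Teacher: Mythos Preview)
Your proposal is correct and follows exactly the paper's own approach: the paper's ``proof'' of Lemma~\ref{Lem:Perturbation} is precisely the footnote pointing to the preceding discussion and to Lemma~\ref{Lem:Perturbation-hmaxmin}, and you reproduce both routes, including the key reconciliation of a maximum weight perfect matching in $\Fcal^u(r)$ with the $\Lambda$-modified maximizers of Lemma~\ref{Lm:boundLambda}. One small slip: the no-envy comparison in the rebate economy should read $\lambda_{i\mu(i)}(u,r)x_{\mu(i)}\geq\lambda_{ia}(u,r)x_a$ (agent $i$ weakly prefers her own rebated bundle), not $\leq$.
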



\subsection{\citet{Eshwar-et-al-2018-Arxiv}'s algorithm.}

\citet{Eshwar-et-al-2018-Arxiv} leverage Lemma~\ref{Lem:Perturbation} to construct a polynomial algorithm to find an envy-free allocation in a piece-wise linear economy in which the slopes in the different intervals comes from a finite set say of cardinality $k$. The essential step in this task is the following. Given $\eta>0$, starting from an envy-free allocation $(r,\sigma)\in F(N,A,u,m)$ with $u\in\Bcal^N$, rebate $\eta>0$ so no-envy and budget violations are preserved.  For arbitrary continuous preferences, Lemma~\ref{Lem:Perturbation} itself does not solve this problem, i.e., stacking the $\varepsilon$s produced by the lemma, may not allow one to reach a rebate of $\eta$ \citep{Alkan-1989-EJPE}. \citet{Eshwar-et-al-2018-Arxiv}'s breakthrough is to realize that, in the piecewise linear domain, this can be done by concatenating the solution of the following linear program for a maximal weight perfect matching in $\Fcal^u(r)$, $\mu$:
\begin{equation}\begin{array}{rll}\max_{t\in\R^A} &\sum_{a\in A}t_a&\\
s.t.:&t_a\leq r_a&\forall a\in A\\
&\nu_{i\mu(i)}(u,r)-\lambda_{i\mu(i)}(u,r)t_{\mu(i)}\geq \nu_{i\mu(j)}(u,r)-\lambda_{i\mu(j)}(u,r)t_{\mu(j)}&\forall\{i,j\}\subseteq N
\\
&t_a\leq b_i &\forall(i,a)\in SB^u(t)
\\
&\sum_{a\in A}t_a\geq\sum_{a\in A}r_a-\eta.&
\end{array}\label{Eq:Els-LP}\end{equation}
This problem is feasible because $r$ is in the option set. Thus, let $(t,\sigma)$ be the allocation associated with its solution. Since $t$ satisfies the second set of constraints in~(\ref{Eq:Els-LP}), the allocation is in $F(N,A,u,m-\varepsilon)$ for some $0<\varepsilon\leq \eta$. Thus, one of the following is true:  (i) the solution is in $F(N,A,u,m-\eta)$; (ii) the solution hit one of the budget constraints; or (iii) $\sigma$ is not a maximal weight perfect matching in $\Fcal^u(t)$. (If these three conditions simultaneously fail, by Lemma~\ref{Lem:Perturbation}, $t$ would not be a solution to the problem.) If one repeats this starting from the previously computed allocation,
eventually case (i) or (ii) happen in $O(n^{k-1})$, because since there are $k$ values for the slopes of the intervals, the maximum number of values for a maximal weight perfect matching in $\Fcal^u(s)$ for any $s\in\R^A$ is bounded above by $(n+1)^{k-1}$.

Thus, for $k$ constant, one can construct a polynomial algorithm that stops in $O(n^{k+c}\varsigma)$, where $\varsigma$ is the number of intervals in the piece-wise linear representation of preferences. In our case, with preferences in $\Bcal$, $\varsigma$ is bounded above by $2n^2$. Thus, the algorithm runs in $O(n^{k+c})$.

\subsection{Directed search within the envy-free set.}

A solution to (\ref{Eq:Els-LP}) leads us to a ``random'' envy-free allocation. We would like to optimize some further criteria within the envy-free set. The following algorithms achieve this. They calculate in polynomial time an allocation in $\Rcal(N,A,u,m)$ for $u\in\Bcal^N$.
\begin{algorithm}[t]
\SetKwData{Left}{left}\SetKwData{This}{this}\SetKwData{Up}{up}
\SetKwFunction{Union}{Union}\SetKwFunction{FindCompress}{FindCompress}\SetKwInOut{Input}{Input}\SetKwInOut{Output}{Output}
\Input{$(N,A,u,m)$ were $u\in\Bcal^N$ is associated with $(v_{ia})_{i\in N,a\in A}$, $(b_i)_{i\in N}$, and for each $i\in N$, $\rho_i\in\{\rho_1,...,\rho_k\}$;}
\Output{$M\geq m$ and an allocation in $\Rcal(N,A,u,M)$;}
{For each $i\in N$ and each $a\in A$, let $V_{ia}:= (v_{ia}+\rho_i b_i)/(1+\rho_i)$ and $\tilde u_{i}$ the function
$(x_a,a)\mapsto \tilde u_i(x_a,a):=(1+\rho_i)(V_{ia}-x_a)$}\;\label{Step:velez1-1}
{Let $m':= n\left(\max_{i\in N,\{a,b\in\}\subseteq A}V_{ib}-V_{ia}+\max_{j\in N}b_j\right)$}\;\label{Step:velez1-2}
{Let $\sigma$ be an assignment that maximizes $\sum_{i\in N}V_{i\sigma(i)}$}\;\label{Step:max-mu-velez1}
{Solve
\[\begin{array}{rll}\max_{R,r\in\R^A} &R&\\
s.t.:&R\leq \tilde u_i(r_{\sigma(i)},\sigma(i))&\forall i\in N\\
&V_{i\sigma(i)}-r_{\sigma(i)}\geq V_{i\sigma(j)}-r_{\sigma(j)}&\forall\{i,j\}\subseteq N
\\
&\sum_{a\in A}r_a=\max\{m,m'\}&
\end{array}\]
and let $r\in\R^A$ be a solution to this LP\;\label{Step:LP-velez1}}
{Returns $M:= \sum_{a\in A}r_a$ and $(r,\sigma)$}\;
\caption{Initializes search of maxmin utility envy-free allocation}
\label{alg:Velez1}
\end{algorithm}

Algorithm~\ref{alg:Velez1} initializes our search by looking for an allocation for a rent~$M$ that is high enough so we make sure that all budget constraints will be violated for each agent, for each consumption of the agent or the other agents, at each possible envy-free allocation for $(N,A,u,M)$. For such a high rent preferences are quasi-linear in the range of the consumption space that contains all envy-free allocations for $(N,A,u,M)$. Thus, one can find a maxmin allocation by essentially using \cite{Gal-et-al-2016-JACM}'s algorithm for this  quasi-linear preference (lines~\ref{Step:max-mu-velez1}-\ref{Step:LP-velez1}).

\begin{theorem}\label{Thm:analysis-alg-vel-1} Algorithm~\ref{alg:Velez1} stops in polynomial time. Given input $(N,A,u,m)$ where $u\in\Bcal$, the algorithm returns $M$ and $(r,\sigma)$ such that $M\geq m$ and $(r,\sigma)\in \Rcal(N,A,u,M)$.
\end{theorem}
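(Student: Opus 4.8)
The plan is to verify the two claims of Theorem~\ref{Thm:analysis-alg-vel-1} separately: first that $M\geq m$ and $(r,\sigma)$ is well-defined (the LP is feasible with a finite maximum), and second that the returned allocation is genuinely a maxmin utility envy-free allocation for the economy $(N,A,u,M)$. The key conceptual point, already flagged in the text preceding the statement, is that $m':=n\bigl(\max_{i,\{a,b\}\subseteq A}(V_{ib}-V_{ia})+\max_j b_j\bigr)$ is engineered so that $M:=\max\{m,m'\}$ forces every room's rent to exceed every agent's soft budget at any envy-free allocation for $(N,A,u,M)$. Thus $M\geq m$ is immediate. The heart of the argument is to show that on the relevant range of the consumption space, each $u_i$ coincides with the linearized function $\tilde u_i(x_a,a)=(1+\rho_i)(V_{ia}-x_a)=v^i_a+\rho_i b_i-(1+\rho_i)x_a=v^i_a-x_a-\rho_i(x_a-b_i)$, which is exactly $u_i(x_a,a)$ when $x_a\geq b_i$. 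So the economy $(N,A,\tilde u,M)$ is a \emph{quasi-linear} economy that represents $(N,A,u,M)$ on the region where all budgets are violated.

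First I would establish the budget-violation claim rigorously. In any envy-free allocation $(r,\sigma)\in F(N,A,u,M)$ with $\sum_a r_a=M$, no-envy across agents pins down the spread of rents: the difference $r_a-r_b$ between any two rooms is controlled by differences of the $V$-values (since in the all-budgets-violated regime agent valuations are the $V_{ia}$ up to a common positive scaling $1+\rho_i$). The average rent is $M/n\geq m'/n=\max_{i,\{a,b\}}(V_{ib}-V_{ia})+\max_j b_j$. Combining the bound on how far any single rent can fall below the average (at most the maximal $V$-spread, by no-envy) with the size of $m'/n$ yields $r_a>b_i$ for every $i\in N$ and every $a\in A$. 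This is the step I expect to be the main obstacle: one must carefully bound the rent spread using the envy-free inequalities in the \emph{nonlinear} economy and check that the factor of $n$ together with the additive $\max_j b_j$ term is exactly what guarantees the strict inequality $r_a>b_i$ uniformly, so that $SB^u(r)=N\times A$ and the linearization $\tilde u$ is valid throughout the envy-free set for this rent.

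Granting the budget-violation claim, the envy-free allocations of $(N,A,u,M)$ coincide exactly with those of the quasi-linear economy $(N,A,\tilde u,M)$, and maximizing $\min_{i} u_i(r_{\sigma(i)},\sigma(i))$ is the same as maximizing $\min_i \tilde u_i(r_{\sigma(i)},\sigma(i))$. For a quasi-linear economy it is standard \citep{Gal-et-al-2016-JACM, A-D-G-1991-Eca} that an assignment supporting some envy-free allocation is precisely one maximizing total valuation $\sum_i V_{i\sigma(i)}$ — this is the role of line~\ref{Step:max-mu-velez1}. I would note that every maxmin envy-free allocation uses such a welfare-maximizing assignment, and that fixing $\sigma$ and solving the LP in line~\ref{Step:LP-velez1} (maximize the common lower bound $R$ subject to $R\leq\tilde u_i$, the no-envy inequalities for $\sigma$, and the rent-sum constraint) returns exactly the maxmin value; hence $(r,\sigma)\in\Rcal(N,A,u,M)$.

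Finally, for the polynomial running time I would observe that every step is elementary: computing the $V_{ia}$ and $m'$ takes $O(n^2)$ arithmetic operations; the maximum-valuation assignment in line~\ref{Step:max-mu-velez1} is a maximum-weight bipartite matching, solvable in polynomial time; and the final optimization is a linear program of $O(n)$ variables and $O(n^2)$ constraints, also solvable in polynomial time. Since there is no iteration here — Algorithm~\ref{alg:Velez1} merely initializes the search by jumping directly to a rent high enough to linearize the problem — the whole procedure terminates in polynomial time, completing the proof.
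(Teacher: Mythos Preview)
Your proposal is correct and follows essentially the same approach as the paper: show that the choice of $m'$ forces every room rent above every budget at any envy-free allocation for rent $M$, so that $u$ and $\tilde u$ coincide on the relevant region and the LP of line~\ref{Step:LP-velez1} reduces to \citet{Gal-et-al-2016-JACM}'s quasi-linear maxmin computation. The paper's budget-violation argument is organized slightly differently---it pigeonholes one agent paying at least $m'/n$, puts that agent in the linear regime, and then propagates via no-envy to force all rents above $\max_j b_j$---but this is the same idea you sketch with the average-rent-plus-spread bound.
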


\begin{proof}Lines \ref{Step:velez1-1} and \ref{Step:velez1-2} are direct definitions. Line 3 is well known to be polynomial in $n$. For line \ref{Step:LP-velez1} we should note that it is indeed a linear program, because all $\tilde u$s are linear in $r$ (at this point $\sigma$ is fixed). It is feasible because given a quasi-linear economy and an assignment $\sigma$ that maximizes the summation of values, there is always an envy-free allocation for that economy with that assignment \citep{A-D-G-1991-Eca}. Because of the second and third sets of constraints in the program, the feasible set is compact. Thus, the program has a solution. Since the number of constraints in the program is polynomial in $n$, the complexity of solving this problem is known to be polynomial in $n$. Let $M$ and $(r,\sigma)$ be the output of the algorithm. Clearly, $M\geq m$. We claim that $(r,\sigma)\in\Rcal(N,A,u,M)$. Since for each $i\in N$, $(1+\rho_i)>0$,  profile $\tilde u$ is ordinally equivalent to the quasi-linear profile with values $V$. Thus, the solution of the linear program in line \ref{Step:LP-velez1} solves $\max_{r\in\R^A:(r,\sigma)\in F(N,A,\tilde u, M)}\min_{l\in L}\tilde u_i(r_{\sigma(i)},\sigma(i))$.

Let $(t,\mu)\in \Rcal(N,A,\tilde u, M)$.  Since $\tilde u$ is ordinally equivalent to a quasi-linear preference and\linebreak $\{(r,\sigma),(t,\mu)\}\subseteq F(N,A,\tilde u,M)$, $(t,\sigma)\in F(N,A,\tilde u,M)$ \citep{Svensson-2009-ET}. Moreover, by \citet[Lemma 3,][]{A-D-G-1991-Eca}, for each $i\in N$, $\tilde u_i(t_{\mu(i)},\mu(i))= \tilde u_i(t_{\sigma(i)},\sigma(i))$. Thus, $\min_{l\in L}\tilde u_i(t_{\sigma(i)},\sigma(i))=\min_{l\in L}\tilde u_i(t_{\mu(i)},\mu(i))$. Since $\min_{l\in L}\tilde u_i(r_{\sigma(i)},\sigma(i))\geq\min_{l\in L}\tilde u_i(t_{\sigma(i)},\sigma(i))$, then $(r,\sigma)\in\Rcal(N,A,\tilde u,M)$.

We claim that $F(N,A,\tilde u,M)=F(N,A,u,M)$ and for each $(t,\mu)\in F(N,A,u,M)$, $u(t_{\mu(i)},\mu(i))=\tilde u(t_{\mu(i)},\mu(i))$. Let $t\in\R^A$ be such that, $\sum_{a\in A}t_a=M$ and $\mu:N\rightarrow A$ a bijection. Then, there is $i\in N$ such that $t_i\geq \left(\max_{i\in N,\{a,b\in\}\subseteq A}V_{ib}-V_{ia}+\max_{j\in N}b_j\right)$. Then,  $t_i\geq b_i$. Thus, $u_i(t_i,\mu(i))=\tilde u_i(t_i,\mu(i))=(1+\rho_i)(V_{i\mu(i)}-t_i)$. Now, $V_{i\mu(i)}-t_i\leq V_{i\mu(i)}-\max_{i\in N,\{a,b\in\}\subseteq A}V_{ib}-V_{ia}-\max_{j\in N}b_j$. Thus, for each $a\in A$, $V_{i\mu(i)}-t_i\leq V_{i\mu(i)}-(V_{i\mu(i)}-V_{ia})-\max_{j\in N}b_j=V_{ia}-\max_{j\in N}b_j$. Thus, for each $a\in A$, $u_i(t_i,\mu(i))=\tilde u_i(t_i,\mu(i))\leq u_i(\max_{j\in N}b_j,a)=\tilde u_i(\max_{j\in N}b_j,a)$. Suppose that $(t,\mu)\in F(N,A,\tilde u,M)$. Thus, for each $a\in A$, $t_a\geq \max_{j\in N}b_j$, for otherwise  $\tilde u_i$ will envy the agent who receives $a$ at $(t,\mu)$. Thus, for each pair $\{i,j\}\subseteq N$, $u_i(t_{\mu(j)},\mu(j))=\tilde u_i(t_{\mu(j)},\mu(j))$. Since $(t,\mu)\in F(N,A,\tilde u,M)$, for each pair $\{i,j\}\subseteq  N$, $u_i(t_{\mu(i)},\mu(i))\geq u_i(t_{\mu(j)},\mu(j))$. Thus, $(t,\mu)\in F(N,A,u,M)$. Suppose then that $(t,\mu)\in F(N,A,u,M)$. A symmetric argument shows that $(t,\mu)\in F(N,A,\tilde u,M)$.

Since $F(N,A,\tilde u,M)=F(N,A,u,M)$; for each $(t,\mu)\in F(N,A,u,M)$ and each $i\in N$, $u(t_{\mu(i)},\mu(i))=\tilde u(t_{\mu(i)},\mu(i))$, and $(r,\sigma)\in \Rcal(N,A,\tilde u,M)$, we have that  $(r,\sigma)\in\Rcal(N,A,u,M)$.
\end{proof}

It is worth noting that there is a subtle choice in the construction of Algorithm~\ref{alg:Velez1}. We first identify a rent that is high enough to guarantee all budget constraints are violated in our original economy for each envy-free allocation. Then, we construct a maxmin utility allocation for this particular rent. One may be tempted to simply consider the quasi-linear economy that coincides with our economy when budgets are violated, construct a maxmin allocation for this economy for an arbitrary rent, and then ``slide'' it so all budgets are violated. This approach does not work because we need a maxmin allocation with respect to a linear transformation of these quasi-linear utilities. These selections may not be invariant to uniform translations of money.

If Algorithm~\ref{alg:Velez1} returns $M=m$, we have actually computed an element of $\Rcal(N,A,u,m)$. Thus, we need to continue our search only when this algorithm returns $M>m$ and $(r,\sigma)\in \Rcal(N,A,u,M)$. Algorithm~\ref{alg:Velez2} does so. This algorithm shares some of its philosophy with \citet{Eshwar-et-al-2018-Arxiv}'s. At a given state in which an allocation in $\Rcal(N,A,u,m')$ with $m'>m$ has been calculated, it  reshuffles rooms and rebates rent by solving (\ref{EQ:LP1}), an LP that maximizes the minimum value of the $u_i$s constrained by no-envy and budget regime changes.

As in \citet{Eshwar-et-al-2018-Arxiv}'s algorithm, solving (\ref{EQ:LP1}) gets us closer to collecting exacly rent $m$. The solution to this problem, $t^s$, may be such that $(t^s,\sigma^s)\not\in \Rcal(N,A,u,\sum_{a\in A}t_a^s)$, however. The issue is that (\ref{EQ:LP1}) has constraints additional to the envy-free ones. In particular, the $SB$ constraints of this problem may bind at its solution.\footnote{The rebate constraints, i.e., $t^s_a\leq r^{s-1}_a$s, never bind at a solution to (\ref{EQ:LP1}).} Thus, since our objective is a maxmin utility envy-free allocation, this step may lose the maxmin property. When this is so, i.e., line \ref{EQ:r^s-velez2} is reached, we need to correct the situation. We do so by grabbing the value of (\ref{EQ:LP1}), $R^s$, and increasing rents again constrained by no-envy and maxmin utility $R^s$, i.e.,  by solving (\ref{EQ:LP2}). It turns out that the solution to (\ref{EQ:LP2}) fits the bill. First, it is a maxmin utility envy-free allocation with value~$R^s$. This is quite unexpected. At face value, problem (\ref{EQ:LP2}) returns a maxmin utility allocation constrained to assignment $\sigma^s$. This may, in principle, not be an unconstrained maxmin utility allocation. However, it turns out that, and here is where the subtlety of our analysis resides, when the algorithm overshoots rebating money with assignment $\sigma^s$, it also reveals that $\sigma^s$ admits a maxmin utility allocation for the highest rent possible before the algorithm ``fell'' from the maxmin utility path. Second, compared with~$t^s$, the solution to (\ref{EQ:LP2}) does not decrease any of the fundamental measures of progress in our algorithm. That is, at a solution of (\ref{EQ:LP2}), $r^s$, either  $\sigma^s$ is not a maximal weight perfect matching in $\Fcal^u(r^s)$ or a new budget constraint was released.

\begin{theorem}\label{Thm:analysis-alg-vel-2} Algorithm~\ref{alg:Velez2} stops in polynomial time. Given input $(N,A,u,m)$ where $u\in\Bcal$, its output belongs to $\Rcal(N,A,u,m)$.
\end{theorem}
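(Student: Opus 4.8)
The plan is to establish correctness and polynomial termination together through a single loop invariant anchored on the rent monotonicity of the maxmin utility solution. Concretely, I would prove by induction on the iteration counter $s$ that at the start of iteration $s$ the algorithm holds an allocation $(r^{s-1},\sigma^{s-1})\in\Rcal(N,A,u,m_{s-1})$ with $m\le m_{s-1}\le M$, where $M$ and $(r^0,\sigma^0)$ are the output of Algorithm~\ref{alg:Velez1}; the base case $s=1$ is exactly Theorem~\ref{Thm:analysis-alg-vel-1}. Before the inductive step I would record the two structural facts I intend to use throughout: $(a)$ by rent monotonicity \citep{Velez-2016-Rent,Velez-2017-Survey}, the value $\Phi(m')$ that $\min_{i\in N}u_i(r_{\sigma(i)},\sigma(i))$ takes, commonly, across all of $\Rcal(N,A,u,m')$ is a continuous, strictly decreasing function of $m'$, and each room's rent on the maxmin path is strictly increasing in $m'$; and $(b)$ because preferences lie in $\Bcal$, this path is piecewise linear, with breakpoints occurring exactly where a room's rent crosses some $b_i$ (a budget-regime change) or where the maximum-weight perfect matching of $\Fcal^u$ changes.

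Within a single linear segment I would argue that solving (\ref{EQ:LP1}) drives the iterate down the maxmin path. Feasibility of a strictly positive rebate is the Perturbation Lemma (Lemma~\ref{Lem:Perturbation}) applied to the maximum-weight matching $\sigma^s$ selected through the $\Lambda$-modified economy (Lemma~\ref{Lm:boundLambda}); the no-envy constraints keep the iterate envy-free and the $SB$ constraints keep the linearization valid. As in \citet{Eshwar-et-al-2018-Arxiv}, exactly one of three events ends the segment: $(i)$ the rebate reaches the target rent $m$, in which case the invariant already exhibits an element of $\Rcal(N,A,u,m)$ and the algorithm halts; $(ii)$ an $SB$ constraint binds; or $(iii)$ $\sigma^s$ ceases to be a maximum-weight matching of $\Fcal^u$. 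The essential subtlety is that in case $(ii)$ the optimal value $R^s$ of (\ref{EQ:LP1}) is capped at $\Phi(\bar m)$, where $\bar m$ is the segment breakpoint, yet maximizing the minimum utility can over-rebate on non-binding rooms, so that $\sum_a t^s_a$ may fall strictly below $\bar m$ and $t^s$ leaves the maxmin path.

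The crux is the correction step: I must show the solution $(r^s,\sigma^s)$ of (\ref{EQ:LP2}) is an unconstrained maxmin allocation for its own rent, even though (\ref{EQ:LP2}) is constrained to the assignment $\sigma^s$. My plan is the following. Since (\ref{EQ:LP2}) maximizes total rent subject to no-envy and $\min_{i\in N}u_i\ge R^s$ with matching $\sigma^s$, monotonicity forces the utility constraint to bind, so $\min_{i\in N}u_i=R^s$ at the solution; writing $\hat m:=\sum_a r^s_a$ gives $\Phi(\hat m)\ge R^s$. For the reverse inequality I would use that $R^s=\Phi(\bar m)$ is exactly the maxmin value at the breakpoint, together with the fact that $\sigma^s$, being the maximum-weight matching of $\Fcal^u$ throughout the segment, supports the maxmin allocation on the entire interval $[\bar m,m_{s-1}]$; maximizing rent against the target value $R^s$ along this segment then returns precisely the path point at $\bar m$, whence $\hat m=\bar m$ and $\Phi(\hat m)=R^s$. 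By the essential uniqueness of utilities and rents on the maxmin path \citep{Velez-2017-Survey}, an envy-free allocation at rent $\hat m$ whose minimum utility equals $\Phi(\hat m)$ belongs to $\Rcal(N,A,u,\hat m)$, restoring the invariant. I expect this identification $\hat m=\bar m$ — that the over-rebate of (\ref{EQ:LP1}) precisely ``reveals'' the highest rent at which $\sigma^s$ still carries the maxmin path — to be the main obstacle, since it is exactly where the piecewise-linear geometry and rent monotonicity must be reconciled.

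Finally, for polynomial termination I would exhibit a strictly advancing progress measure. Each completed iteration either halts in case $(i)$, or, by the analysis of (\ref{EQ:LP2}), leaves the algorithm in a state where either $\sigma^s$ is no longer a maximum-weight matching of $\Fcal^u(r^s)$ or a budget constraint has been permanently released. Matching changes are bounded by the $(n+1)^{k-1}$ possible values of a maximum-weight matching of $\Fcal^u$ noted in the analysis of \citet{Eshwar-et-al-2018-Arxiv}, and budget releases are bounded by $|N\times A|=n^2$; I would package these into a lexicographic potential that the temporary rent increase of (\ref{EQ:LP2}) cannot reverse, yielding a polynomial iteration count. Since each iteration solves the polynomially sized linear programs (\ref{EQ:LP1}) and (\ref{EQ:LP2}) and one maximum-weight matching, all in time polynomial in $n$, the total running time is polynomial, which together with the invariant at halting completes the proof.
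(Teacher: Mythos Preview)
Your skeleton matches the paper's: the same loop invariant, the same three stopping cases for (\ref{EQ:LP1}), and the same termination potential (budget releases plus matching-weight values). The gap is precisely at the point you yourself flag as the main obstacle.

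You assert that the optimum of (\ref{EQ:LP1}) satisfies $R^s=\Phi(\bar m)$ and that $\sigma^s$ ``supports the maxmin allocation on the entire interval $[\bar m,m_{s-1}]$'', and then conclude that (\ref{EQ:LP2}) lands back on the path at $\bar m$. But you give no mechanism for the equality $R^s=\Phi(\bar m)$. The LP (\ref{EQ:LP1}) does not track the maxmin path; when an $SB$ constraint binds, its value $R^s$ is determined by the geometry of its own feasible polytope, not by any path breakpoint, and you yourself note that $\sum_a t^s_a$ may fall strictly below $\bar m$. So the claim that the overshoot happens at the path value is exactly what must be proved, and ``piecewise-linear geometry and rent monotonicity'' alone do not deliver it. The paper closes this gap with an additional technical result, the \emph{Converse Perturbation Lemma} (Lemma~\ref{Lem:ConversePerturbation}): if $(r,\sigma)$ and $(t,\sigma)$ are both envy-free with $r_a>t_a$ for every $a$ and no budget threshold is crossed in between, then $\sigma$ is a maximum-weight perfect matching in $\Fcal^u(r)$. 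The argument then runs by contradiction: set $R^*:=\sup\{R\le R^s:\exists (t,\sigma^s)\in\Rcal(N,A,u,\sum t_a),\,R=\min_i u_i\}$; if $R^*<R^s$, rent monotonicity gives $r^*_a>t^s_a$ componentwise, the $SB$ constraints of (\ref{EQ:LP1}) prevent regime changes between $r^*$ and $t^s$, the Converse Perturbation Lemma then forces $\sigma^s$ to be maximum-weight at $r^*$, and Lemma~\ref{Lem:Perturbation-hmaxmin} contradicts the maximality of $R^*$. Only after $R^*=R^s$ is secured does one show $r^s=r^*$ and hence $(r^s,\sigma^s)\in\Rcal$.

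Your termination paragraph inherits the same deficiency: the statement that (\ref{EQ:LP2}) leaves the algorithm at a state where either $\sigma^s$ is no longer maximum-weight or a budget constraint has been released is not a consequence of solving (\ref{EQ:LP2}) per se; it follows only after one knows $r^s$ is back on the maxmin path (so that Lemma~\ref{Lem:Perturbation-hmaxmin} applied at $r^s$ would otherwise contradict optimality of $r^s$ in (\ref{EQ:LP1})). In short, the missing idea is Lemma~\ref{Lem:ConversePerturbation}; without it, both the correctness of the correction step and the progress measure are unproved.
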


\begin{algorithm}[H]
\SetKwData{Left}{left}\SetKwData{This}{this}\SetKwData{Up}{up}
\SetKwFunction{Union}{Union}\SetKwFunction{FindCompress}{FindCompress}\SetKwInOut{Input}{Input}\SetKwInOut{Output}{Output}

\Input{$(N,A,u,m)$, $u\in\Bcal^N$, $b:=(b_i)_{i\in N}$, $\rho:=(\rho_i)_{i\in N}\in\{\rho_1,...,\rho_k\}^N$, $M>m$, and $(r,\sigma)\in \Rcal(N,A,u,M)$}
\Output{an allocation in $\Rcal(N,A,u,m)$}
Initialize $s\gets 0$\;
Let $(r^s,\sigma^s):=(r,\sigma)$ and $R^s:=\min_{i\in N}u_i(r_{\sigma(i)},\sigma(i))$\;

\While{$\sum_{a\in A}r^s_a>m$}{
Update $s\gets s+1$\;\label{Step:update-s}
For each $i\in N$ and $a\in A$, let $\tilde u_i^s(t_a,a):= \nu^s_{ia}(u,r)-\lambda^s_{ia}(u,r)t_a$\;
 Let $\sigma^s$ be a maximum weight perfect matching in $\Fcal^u(r^{s-1})$\;\label{EQ:maximum-sigma-velez2}
Solve
\begin{equation}\begin{array}{rll}\max_{R,t^s\in\R^A} &R&\\
s.t.:&t^s_a\leq r^{s-1}_a&\forall a\in A
\\&R\leq \tilde u^s_i(t^s_{\sigma^s(i)},\sigma^s(i))&\forall i\in N\\
&\tilde u^s_i(t^s_{\sigma^s(i)},\sigma^s(i))\geq \tilde u^s_i(t^s_{\sigma^s(j)},\sigma^s(j))&\forall \{i,j\}\subseteq N
\\
&t^s_a\geq b_i&\forall (i,a)\in SB^u(r^{s-1})
\\
&\sum_{a\in A}t^s_a\geq m&
\end{array}\label{EQ:LP1}\end{equation}
and let $t^s\in\R^A$ and $R^s$ be a solution to this LP\;
\eIf{$(t^s,\sigma^s)\in \Rcal(N,A,u,\sum_{a\in A}t^s_a)$\label{Step:check-in-R}}
{$r^s\gets t^s$}
{\label{EQ:r^s-velez2}
Solve
\begin{equation}\begin{array}{rll}\max_{r^s\in\R^A} &\sum_{a\in A}r^s_a&\\
s.t.:&r^s_a\geq t^s_a&\forall a\in A
\\&R^s\leq \tilde u^s_i(r^s_{\sigma^s(i)},\sigma^s(i))&\forall i\in N\\
&\tilde u^s_i(r^s_{\sigma^s(i)},\sigma^s(i))\geq \tilde u^s_i(r^s_{\sigma^s(j)},\sigma^s(j))&\forall \{i,j\}\subseteq N
\end{array}\label{EQ:LP2}\end{equation}
and let $r^s$ be a solution to this LP\;}
}
Return $(r^s,\sigma^s)$\;
\caption{Calculates a maxmin envy-free allocation.}
\label{alg:Velez2}
\end{algorithm}

The first step in the proof of the theorem is to realize that the Perturbation Lemma can be strengthened to guarantee that perturbations can preserve the maxmin utility property.

\begin{lemma}[Maxmin perturbation Lemma]\label{Lem:Perturbation-hmaxmin} Let $(r,\sigma)\in \Rcal(N,A,u,m)$ such that $u\in \Bcal^N$ and $\mu$ a maximum weight perfect matching in $\Fcal^u(r)$. Then, there is $\varepsilon>0$  and a function $\delta\in[0,\varepsilon]\mapsto (r^\delta,\mu)\in \Rcal(N,A,u,m-\delta)$ such that $(r^0,\mu)=(r,\mu)$; and  for each pair $0\leq \delta<\eta\leq\varepsilon$, and each $i\in N$,  $u_i(r^\eta_{\mu(i)},\mu(i))>u_i(r^\delta_{\mu(i)},\mu(i))$, and for each $a\in A$, $r^\delta_a>r^\eta_a$.
\end{lemma}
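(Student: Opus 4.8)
The plan is to identify the rebate path with the maxmin selection itself and then to prove that the given matching $\mu$ can serve as a common assignment for all its nearby members. By the structural theory of these selections \citep{Velez-2016-Rent,Velez-2017-Survey}, for each money level the set $\Rcal(N,A,u,m')$ carries a single rent vector $R(m')$ and a single utility profile $U(m')$, both continuous and strictly monotone in $m'$ (rents increasing, welfare decreasing). Since $(r,\sigma)\in\Rcal(N,A,u,m)$ we have $r=R(m)$, so I would set $r^\delta:=R(m-\delta)$ and keep $\mu$ as the assignment throughout, so that $(r^0,\mu)=(r,\mu)$. Two things then remain: that $(r^\delta,\mu)$ is itself an admissible \emph{maxmin} allocation (not merely that $R(m-\delta)$ is maxmin under \emph{some} assignment), and that the monotonicity inequalities are strict.

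For the harder first point I would begin by freezing budget regimes. Because we only decrease rents and $SB^u(r)$ is defined by the strict inequalities $r_a>b_i$, continuity yields $\varepsilon_0>0$ with $SB^u(s)=SB^u(r)$ for every $s\le r$ with $\|s-r\|\le\varepsilon_0$; on this range $u$ coincides with the quasi-linear profile $\tilde u_i(s_a,a):=\nu_{ia}(u,r)-\lambda_{ia}(u,r)s_a$, so envy-freeness and induced utilities for $u$ and $\tilde u$ agree. Taking $\delta$ small enough that both $R(m-\delta)$ and the allocation furnished by the Perturbation Lemma lie in this range, I would argue as follows. The Perturbation Lemma \citep{Eshwar-et-al-2018-Arxiv}, applied to the max-weight matching $\mu$ in $\Fcal^u(r)$, produces some $(\rho^\delta,\mu)\in F(N,A,u,m-\delta)$, hence $(\rho^\delta,\mu)\in F(N,A,\tilde u,m-\delta)$. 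Let $(R(m-\delta),\nu)\in\Rcal(N,A,u,m-\delta)$ be any maxmin allocation; it too is envy-free for $\tilde u$. Since $\tilde u$ is quasi-linear, the interchange result \citep{Svensson-2009-ET} lets me combine the rent vector of one envy-free allocation with the assignment of the other, giving $(R(m-\delta),\mu)\in F(N,A,\tilde u,m-\delta)$. By \citet[Lemma 3]{A-D-G-1991-Eca}, two envy-free allocations sharing the rent vector $R(m-\delta)$ induce the same utility profile, so $(R(m-\delta),\mu)$ has the maxmin profile $U(m-\delta)$; since $u$ and $\tilde u$ agree on the frozen range, this transfers back to $u$ and gives $(r^\delta,\mu)=(R(m-\delta),\mu)\in\Rcal(N,A,u,m-\delta)$.

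This interchange step — upgrading ``the maxmin rent vector is supported by \emph{some} max-weight matching'' to ``it is supported by \emph{our} $\mu$'' — is the main obstacle, and it is precisely where the max-weight hypothesis on $\mu$ is spent: it is used (via the Perturbation Lemma) to produce an envy-free allocation carried by $\mu$, which is then the lever for the quasi-linear interchange and equal-utility lemmas. The same subtlety is what the paper flags in its discussion of Algorithm~\ref{alg:Velez2}, and isolating it cleanly here is what makes the lemma go through.

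Finally I would read off the strict inequalities from rent monotonicity. Strictness of $R$ \citep{Velez-2016-Rent} gives $r^\delta_a=R_a(m-\delta)>R_a(m-\eta)=r^\eta_a$ for all $a\in A$ and $0\le\delta<\eta\le\varepsilon$; writing $x_a:=r^\delta_a-r^\eta_a>0$ and using that on the frozen range $u_i(r^\eta_{\mu(i)},\mu(i))-u_i(r^\delta_{\mu(i)},\mu(i))=\lambda_{i\mu(i)}(u,r)\,x_{\mu(i)}>0$, since the slopes $\lambda_{i\mu(i)}(u,r)$ are strictly positive, every agent's welfare strictly increases, which is the welfare inequality. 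Taking $\varepsilon$ to be the minimum of the thresholds coming from the Perturbation Lemma, the regime-freezing bound $\varepsilon_0$, and the interval on which $R(m-\cdot)$ remains in the relevant range completes the argument.
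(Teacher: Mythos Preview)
Your high-level strategy matches the paper's --- set $r^\delta$ to the unique maxmin rent vector at level $m-\delta$ and show that the given $\mu$ supports it --- but the interchange step has a genuine gap. You call the frozen profile $\tilde u_i(s_a,a)=\nu_{ia}(u,r)-\lambda_{ia}(u,r)\,s_a$ ``quasi-linear'' and invoke \citet{Svensson-2009-ET} to pass from $(\rho^\delta,\mu),(R(m-\delta),\nu)\in F(N,A,\tilde u,m-\delta)$ to $(R(m-\delta),\mu)\in F(N,A,\tilde u,m-\delta)$. But $\tilde u$ is \emph{not} quasi-linear, nor ordinally equivalent to a quasi-linear profile: the slope $\lambda_{ia}(u,r)$ equals $1+\rho_i$ when $r_a>b_i$ and $1$ otherwise, so it varies with the room as well as the agent. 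The interchange property --- that an assignment supporting one envy-free rent vector supports every other at the same total --- is special to the quasi-linear case, where ``optimal assignment'' (maximizing $\sum_i V_{i\sigma(i)}$) is a rent-independent notion; for heterogeneous-slope linear utilities no such notion exists, and the set of supporting assignments can genuinely change with the rent vector. Your argument gives no reason why $\mu$, envy-free at $\rho^\delta$, should also be envy-free at $R(m-\delta)$.

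The paper closes this gap by passing to the \emph{rebate} coordinates rather than the rent coordinates. On the frozen range a rebate $x_a>0$ at room $a$ raises agent $i$'s utility by $\lambda_{ia}(u,r)\,x_a$; after the substitution $y_a=\log x_a$ this becomes the genuinely quasi-linear profile $\hat u_i(y_a,a)=\log\lambda_{ia}(u,r)+y_a$, for which $\mu$ is optimal precisely because it is max-weight in $\Fcal^u(r)$. Two bridging steps you do not have are then needed: first one shows the maxmin assignment $\gamma$ at $m-\delta$ is a perfect matching in $\Fcal(r)$ and uses Lemma~\ref{Lem:induceFinlinear} to carry envy-freeness of $(r^\delta,\gamma)$ for $u$ over to envy-freeness of $(\log(r-r^\delta),\gamma)$ for $\hat u$; then Svensson's interchange in the $\hat u$ economy replaces $\gamma$ by $\mu$; finally a direct computation (Step~1 of the paper's proof) carries envy-freeness back from $\hat u$ to $u$. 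Once $\mu$'s optimality for $\hat u$ is used directly, your auxiliary allocation $(\rho^\delta,\mu)$ from the Perturbation Lemma becomes unnecessary.
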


Lemma~\ref{Lem:Perturbation-hmaxmin}  guarantees that the solution to~(\ref{EQ:LP1}) will decrease the aggregate rent we are collecting. Now, suppose that we are at some iteration of Algorithm~\ref{alg:Velez2} for $s>0$ in which we find that $(t^s,\sigma^s)\not\in\Rcal(N,A,u,\sum_{a\in A}t^s_a)$. Intuitively, the algorithm deviated from the monotone path (in utility space and rent space) that is determined by the maxmin utility envy-free solution. We argue now that by solving~(\ref{EQ:LP2}) we are able to return to this path without significant loss of progress. More precisely, our solution to~(\ref{EQ:LP2}), i.e., $(r^s,\sigma^s)$ is in  $\Rcal(N,A,u,\sum_{a\in A}r^s_a)$ and $R^s$ is the minimum utility at $(r^s,\sigma^s)$.

Suppose then that $R^s$, the value of problem~(\ref{EQ:LP1}), is the minimum utility at $(t^s,\sigma^s)$, an allocation that is not in the maxmin path. Let $R^*$ be the maximal minimum utility across agents, bounded above by $R^s$, at an envy-free allocation obtained by a rebate at $(r^{s-1},\sigma^{s-1})$, for assignment $\sigma^{s}$. If we want to keep our solution in the maxmin path, our best chance with assignment $\sigma^s$ is to return to an allocation with maxmin value~$R^*$. Obviously, $R^*\leq R^s$. It turns out that  $R^*=R^s$. Let $(r^*,\sigma^s)$ be an allocation associated with $R^*$ (in the set that defines $R^*$). If $R^*<R^s$, one can show that $(r^s,\sigma^s)$ is obtained by rebating money in each room at $(r^*,\sigma^s)$. Intuitively, this reveals that there was still room to rebate money at $(r^*,\sigma^s)$ without changing the assignment. The following key lemma states that this implies that $\sigma^s$ must be a maximal weight perfect matching in $\Fcal^u(r^*)$.
\begin{lemma}[Converse perturbation lemma]\label{Lem:ConversePerturbation}Let $u\in\Bcal^N$, $\varepsilon>0$, $(r,\sigma)\in F(N,A,u,m)$ and $(t,\sigma)\in F(N,A,u,m-\varepsilon)$ such that for each $a\in A$, $r_a>t_a$. Suppose that there is no $i\in N$ and $a\in A$ such that, $r_a>b_i>t_a$. Then, $\sigma$ is a maximal weight perfect matching in $\Fcal^u(r)$.
\end{lemma}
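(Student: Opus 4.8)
The plan is to exploit the no-crossing hypothesis to linearize $u$ uniformly along the rebate segment, and then compare $\sigma$ to an arbitrary competing matching through a product-over-agents inequality. First I would establish the structural consequence of the assumption that no pair $(i,a)$ satisfies $r_a > b_i > t_a$. Fix $(i,a)$ and consider the nondegenerate interval $[t_a, r_a]$. If $(i,a) \in SB^u(r)$, i.e.\ $r_a > b_i$, the hypothesis forbids $b_i > t_a$, so $t_a \ge b_i$ and the whole interval lies weakly above agent $i$'s kink for room $a$; if $(i,a) \notin SB^u(r)$ then $t_a < r_a \le b_i$ and the whole interval lies weakly below the kink. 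In either case $u_i(\cdot,a)$ is affine on $[t_a,r_a]$ with $u_i(y,a) = \nu_{ia}(u,r) - \lambda_{ia}(u,r)\,y$ throughout. Writing $x_a := r_a - t_a > 0$, this yields the identity $u_i(t_a,a) = u_i(r_a,a) + \lambda_{ia}(u,r)\,x_a$ for every pair $(i,a)$, which is the engine of the argument.

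Next, observe that $\sigma$ is automatically a perfect matching of $\Fcal^u(r)$, since each diagonal pair $(i,\sigma(i))$ is an edge. To show it has maximal weight, let $\mu$ be any perfect matching of $\Fcal^u(r)$; by the definition of the graph, $u_i(r_{\sigma(i)},\sigma(i)) = u_i(r_{\mu(i)},\mu(i))$ for each $i$. I would then invoke envy-freeness of $(t,\sigma)$, namely $u_i(t_{\sigma(i)},\sigma(i)) \ge u_i(t_{\mu(i)},\mu(i))$, substitute the two linearization identities from the first step, and cancel the equal terms $u_i(r_{\sigma(i)},\sigma(i)) = u_i(r_{\mu(i)},\mu(i))$. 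This collapses the envy inequality to $\lambda_{i\sigma(i)}(u,r)\,x_{\sigma(i)} \ge \lambda_{i\mu(i)}(u,r)\,x_{\mu(i)}$ for every $i \in N$.

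Finally I would multiply these $n$ inequalities, all between strictly positive quantities, and use that $\sigma$ and $\mu$ are bijections, so $\prod_{i\in N} x_{\sigma(i)} = \prod_{a\in A} x_a = \prod_{i\in N} x_{\mu(i)}$; since every $x_a>0$ this common factor cancels and leaves $\prod_{i\in N} \lambda_{i\sigma(i)}(u,r) \ge \prod_{i\in N} \lambda_{i\mu(i)}(u,r)$, equivalently $\sum_{i\in N} \log \lambda_{i\sigma(i)}(u,r) \ge \sum_{i\in N} \log \lambda_{i\mu(i)}(u,r)$. As $\mu$ was arbitrary, $\sigma$ maximizes weight in $\Fcal^u(r)$. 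I expect the only genuine obstacle to be the first step: one must verify that the no-crossing hypothesis pins the entire segment $[t_a,r_a]$ (not merely its endpoints) inside a single linear piece of $u_i(\cdot,a)$, since it is precisely the constancy of the slope $\lambda_{ia}(u,r)$ across the rebate that produces the clean per-agent inequality. Once that is secured, the product-and-cancel step is short.
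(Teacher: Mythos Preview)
Your proof is correct. The linearization identity $u_i(t_a,a)=u_i(r_a,a)+\lambda_{ia}(u,r)\,x_a$ does hold for every pair $(i,a)$ under the no-crossing hypothesis (the boundary cases $t_a=b_i$ and $r_a=b_i$ are harmless because the two linear pieces agree at the kink), and the product-and-cancel step is valid since all $x_a>0$.

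Your route is genuinely different from the paper's. The paper invokes the auxiliary Lemma~\ref{Lem:induceFinlinear} to embed $(r-t,\sigma)$ as an envy-free allocation in the $\Lambda$-modified linear economy $\hat u$, and then appeals to Svensson's characterization that envy-free assignments in quasi-linear economies maximize the sum of values; this forces $\sigma$ to maximize $\sum_i\log\lambda_{i\sigma(i)}(u,r,\Lambda)$ over all bijections, which (because $\Lambda$ is chosen small) coincides with maximal weight in $\Fcal^u(r)$. You bypass all of that machinery by comparing $\sigma$ directly against an arbitrary competing perfect matching $\mu$ of $\Fcal^u(r)$ and exploiting the bijection symmetry of $\prod_a x_a$. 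Your argument is more elementary and self-contained---no $\Lambda$-modification, no auxiliary lemma, no external reference---while the paper's argument has the advantage of reusing the rebates-and-reshuffles framework already built for the forward Perturbation Lemma, so the converse falls out almost for free once that infrastructure is in place.
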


Thus, it is impossible that $R^*<R^s$, for otherwise by Lemma~\ref{Lem:Perturbation-hmaxmin}, $R^*$ would not be maximal. Consequently, in order to recover a maxmin allocation with value $R^s$, we can keep assignment $\sigma^s$ and increase rent from $t^s$, i.e., we can solve~(\ref{EQ:LP2}). Again by Lemma~\ref{Lem:Perturbation-hmaxmin}, either a budget constraint was just released at $(r^s,\sigma^s)$, or $\sigma^s$ is not a maximal weight perfect matching in $\Fcal^u(r^s)$. Thus, if by the time all budget constraints are released, which must happen in polynomial time, the algorithm has not stopped, it will do so after solving~(\ref{EQ:LP1}) only once more.

\begin{proof}[\textit{Proof of Theorem~\ref{Thm:analysis-alg-vel-2}}]We prove that the processes in each line of the algorithm are well defined and can be individually completed in polynomial time. Then we bound the number of times the while loop is visited.

For each $s\geq 1$, $(r^{s-1},\sigma^s)\in F(N,A,u,\sum_{a\in A}r^{s-1}_a)$, because it is either the input when $s=0$ or  the solution to either (\ref{EQ:LP1}) or (\ref{EQ:LP2}), which have no-envy constraints. As long as $(r^{s-1},\sigma^s)\in F(N,A,u,\sum_{a\in A}r^{s-1}_a)$, $r^{s-1}$ is in the feasible set of (\ref{EQ:LP1}).  Because of the first and last constraints, this set is compact. Thus, it has a solution. Since it is a linear program with a polynomial number of constraints, it can be computed in polynomial time.

Line~\ref{Step:check-in-R} can be completed in polynomial time, i.e., given $t^s\in\R^A$ and $\sigma^s:N\rightarrow A$ a bijection, $(t,\sigma)\in \Rcal(N,A,u,\sum_{a\in A}t_a)$ is verifiable in polynomial time. By \citet[Proposition 5.9][]{Velez-2017-Survey}, this problem is equivalent to check that the allocation is envy-free and that for a given directed graph with $n$ nodes, there is a path from each node to a given set of nodes.

Because of the first and second constraint in (\ref{EQ:LP2}), the feasible set in this program is compact.\footnote{Even if utility was not linear, this program is compact because $R^s$ is the value of (\ref{EQ:LP1}).} Since $t^s$ belongs to this feasible set, the program has a solution.

We claim that the algorithm stops. Moreover, if it returns $(r^s,\sigma^s)$, $s$ is bounded by $n^{k+2}$. Thus, the algorithm runs in $O(n^{k+c})$ for some $c>2$.

Suppose that we update~$s$ in line~\ref{Step:update-s}, $\sum_{a\in A}r^{s-1}_a>m$,  $(r^{s-1},\sigma^{s-1})\in \Rcal(N,A,u,\sum_{a\in A}r^{s-1}_a)$, and $R^{s-1}=\min_{i\in N}u_i(r_{\sigma^{s-1}(i)},\sigma^{s-1}(i))$. Let $t^s$ be the solution to~(\ref{EQ:LP1}). Then, $(t^s,\sigma^s)\in F(N,A,u,\sum_{a\in A}t^s_a)$. Thus, $\sigma^s$ is a perfect matching in $\Fcal^u(t^s)$. If $\sum_{a\in A}t^s_a=m$, the algorithm terminates. Suppose then that $\sum_{a\in A}t^s_a>m$. We claim that  $(r^s,\sigma^s)\in \Rcal\left(N,A,u,\sum_{a\in A}r^s_a\right)$,  $R^s=\min_{i\in N}u_i(r^s_{\sigma^s(i)},\sigma^s(i))$, and~$r^s$ is a solution to~(\ref{EQ:LP1}). This is obviously so when  $(t^s,\sigma^s)\in \Rcal(N,A,u,\sum_{a\in A}t^s_a)$ and thus $r^s=t^s$.

Suppose then that $(t^s,\sigma^s)\not\in \Rcal(N,A,u,\sum_{a\in A}t^s_a)$. Thus, $r^s$ is a solution to~(\ref{EQ:LP2}). Let $R^*$ be the maximum of
\begin{equation}\left\{R\leq R^s:\exists(t,\sigma^s)\in \Rcal\left(N,A,u,\sum_{a\in A}t_a\right),R=\min_{i\in N}u_i(t_{\sigma^s(i)},\sigma^s(i))\right\}.\label{EQ:setR}\end{equation}
By Lemma~\ref{Lem:Perturbation-hmaxmin}, the set above is non-empty. Since preferences are continuous, the set is also closed (this follows from \citet[Proposition 2-2,][]{Velez-2016-Rent} and \citet[Decomposition Lemma,][]{A-D-G-1991-Eca}). Thus, $R^*$ is well-defined and $R^*>R^{s-1}$.  Let $(r^*,\sigma^s)\in \Rcal\left(N,A,u,\sum_{a\in A}r^*_a\right)$ be such that\linebreak $R^*=\min_{i\in N}u_i(r^*_{\sigma^s(i)},\sigma^s(i))$. We claim that $R^*=R^s$. Suppose by contradiction that $R^*<R^s$. Since $R^*>R^{s-1}$, by \citet[Theorem 1,][]{Velez-2016-Rent} and \citet[Decomposition Lemma,][]{A-D-G-1991-Eca}, for each $a\in A$, $r^{s-1}_a>r^*_a$.  By \citet[Proposition 5.9 and Lemma 5.7,][]{Velez-2017-Survey}, for each $a\in A$, $r^*_a>r^s_a$. Because of the first constraints in~(\ref{EQ:LP1}), this program rebates money from $(r^{s-1},\sigma^{s-1})$. Since the program is also constrained by budget regime changes, i.e., the $SB$ constraints, $B^u(r^*)=B^u(r^s)$. Thus, by Lemma~\ref{Lem:ConversePerturbation}, $\sigma^s$ is a maximal weight perfect matching in $\Fcal^u(r^*)$. By Lemma~\ref{Lem:Perturbation-hmaxmin}, $R^*$ is not the maximal element of~(\ref{EQ:setR}). This is a contradiction.

Since $R^*=R^s$, we have that $(r^*,\sigma^s)\in \Rcal\left(N,A,u,\sum_{a\in A}r^*_a\right)$ and  $R^s=\min_{i\in N}u_i(r^*_{\sigma^s(i)},\sigma^s(i))$. Since for each $i\in N$, $u_i(t^s_{\sigma^s(i)},\sigma^s(i))\geq R^s$, by \citet[Proposition 5.9 and Lemma 5.7,][]{Velez-2017-Survey}, for each $a\in A$, $t^s_a\geq r^*_a$.  Thus, $r^*$ is in the feasible set of~(\ref{EQ:LP2}). Since $r^s$ is a solution of~(\ref{EQ:LP2}), it is also in its feasible set. Thus, for each $i\in N$, $R^s\leq \min_{i\in N}u_i(r^*_{\sigma^s(i)},\sigma^s(i))$. By \citet[Proposition 5.9 and Lemma 5.7,][]{Velez-2017-Survey}, for each $a\in A$, $r^s_a\geq r^*_a$. Since $r^s$ is a solution of~(\ref{EQ:LP2}), $r^*=r^s$. Thus, $(r^s,\sigma^s)\in \Rcal\left(N,A,u,\sum_{a\in A}r^s_a\right)$ and  $R^s=\min_{i\in N}u_i(r^s_{\sigma^s(i)},\sigma^s(i))$. Thus, $r^s$ is also a solution to~(\ref{EQ:LP1}).

We claim that it must be the case that either there is $(i,a)\in SB^u(r^{s-1})$ such that $r^s_a=b_i$, or $\sigma^s$ is not a maximal weight perfect matching in $\Fcal^u(r^s)$. Suppose by contradiction that for each $(i,a)\in SB^u(r^{s-1})$, $r^s_a>b_i$, and $\sigma^s$ is a maximal weight perfect matching in $\Fcal^u(r^s)$. By Lemma~\ref{Lem:Perturbation-hmaxmin}, there is $\delta>0$ and $t^\delta\in\R^A$ such that $(t^\delta,\sigma^s)\in \Rcal(N,A,u,m-\delta)$; for each $a\in A$, $t_a^\delta<r^s_a\leq r^{s-1}_a$; for each $(i,a)\in SB^u(r^{s-1})$, $t^\delta_a>b_i$; and $\sum_{a\in A}t^\delta_a>m$; and for each $i\in N$, $u_i(t^\delta_{\sigma^s(i)},\sigma^s(i))>u_i(r^s_{\sigma^s(i)},\sigma^s(i))$. Since for each $(i,a)\in B^u(r^{s-1})$, $t^\delta_a>b_i$, then for each $i\in N$, $\tilde u_i(t^\delta_{\sigma^s(i)},\sigma^s(i))>\tilde u_i(r^s_{\sigma^s(i)},\sigma^s(i))$. Thus, $\min_{i\in N}\tilde u^s_i(r^s_{\sigma^s(i)},\sigma^s(i))<\min_{i\in N}\tilde u^s_i(t^\delta_{\sigma^s(i)},\sigma^s(i))$. Thus, $r^s$ is not a solution to~(\ref{EQ:LP1}). This is a contradiction.

Thus, each time that $s$ is updated, either $|SB^u(r^s)|<|SB^u(r^{s-1})|$ or the weight of $\sigma^s$ in $\Fcal^u(r^{s-1})$ is greater than the weight of $\sigma^{s-1}$ in $\Fcal^u(r^{s-2})$. There are at most $n^2$ elements in $SB^u(r^0)$ and at most $(n+1)^{k-1}$ values for the weight of a perfect matching. Thus, the algorithm either stops or reaches a state in which $SB^u(r^s)=\emptyset$ in $O(n^{k+c})$ for some $c>0$. If $SB^u(r^s)=\emptyset$, the algorithm returns the solution to (\ref{EQ:LP1}) in the next while loop iteration. Indeed, without the SB constraints,  (\ref{EQ:LP1}) always returns a maxmin utility allocation. Now, each optimal assignment in a quasi-linear economy can substitute the assignment in any envy-free allocation preserving no-envy \citep{Svensson-2009-ET,Gal-et-al-2016-JACM} (see \citep{Procaccia-Velez-Yu-2018-AAAI} for a short proof). Since the slopes of $u$ are then invariant under a decrease of rent, if the while loop were going to be visited for $s+1$, then $\sigma^s$ would still be a maximal perfect matching in $\Fcal^u(r^s)$, and by Lemma~\ref{Lem:left-Perturbation-hmaxmin}, $t^s$ would not be a solution to~(\ref{EQ:LP1}).
\end{proof}

\subsection{Proof of Theorem~\ref{Th:main-takeout-complexity}}\label{Sec:proof-general}

Algorithms~\ref{alg:Velez1} and~\ref{alg:Velez2} can be modified to calculate the allocations for all the selections in Theorem~\ref{Th:main-takeout-complexity}. Formally, a \textit{selection} (from the envy-free set) is a set valued function that associates with each economy a subset of envy-free allocations for it. The generic selection is $(N,A,u,m)\mapsto\Psi(N,A,u,m)\subseteq F(N,A,u,m)$. A selection $\Psi$ is \textit{essentially single-valued} if each agent is indifferent between each pair of allocations in $\Psi(N,A,u,m)$. It is well-known that if $\Psi$ is essentially single-valued, for a given room, say $a\in A$, the rent assigned to room $a$ is the same in each allocation in $\Psi(N,A,u,m)$ \citep{A-D-G-1991-Eca,Velez-2016-Rent}. An essentially single-valued selection $\Psi$ is \textit{rent monotone} if for each $i\in N$, the function $(N,A,u,m)\mapsto u_i(r_{\sigma(i)},\sigma(i))$ for $(r,\sigma)\in \Psi(N,A,u,m)$ is a strictly decreasing function of $m$. It is well-known that an essentially single-valued $\Psi$ is rent monotone if and only if the rent component functions $(N,A,u,m)\mapsto r_a$ where $(r,\sigma)\in \Psi(N,A,u,m)$, are strictly increasing functions of $m$ \citep{A-D-G-1991-Eca,Velez-2016-Rent}. A selection $\Psi$ is \textit{Pareto indifferent} if for each $(r,\sigma)\in\Psi(N,A,u,m)$ and each $(t,\mu)$ such that for each $i\in N$, $u_i(r_{\sigma(i)},\sigma(i))=u_i(t_{\mu(i)},\mu(i))$, we have that $(t,\mu)\in\Psi(N,A,u,m)$. Each selection in the statement of Theorem~\ref{Th:main-takeout-complexity} is essentially single-valued, rent monotone, and Pareto indifferent \citep{A-D-G-1991-Eca,Velez-2016-Rent}.

 First, consider a non-empty family of positive affine linear transformations $(f_{i})_{i\in S}$ for some $S\subseteq N$. One can calculate an allocation in
\[\underset{(r,\sigma)\in F(e)}{\arg\max}\min_{i\in S}f_i(u_i(r_{\sigma(i)},\sigma(i)))\]
by trivially modifying these algorithms as follows: In the LP in Algorithm~\ref{alg:Velez1} replace the maxmin constraints $R\leq\tilde u_i(\cdot),\,\forall i\in N$ with $R\leq f_i(\tilde u_i^s(\cdot)),\,\forall i\in S$; in Algorithm~\ref{alg:Velez2} replace $R\leq\tilde u_i^s(\cdot),\,\forall i\in N$ with $R\leq f_i(\tilde u_i^s(\cdot)),\,\forall i\in S$ in (\ref{EQ:LP1}), and replace $R^s\leq\tilde u_i^s(\cdot),\,\forall i\in N$ with $R\leq f_i(\tilde u_i^s(\cdot)),\,\forall i\in S$ in (\ref{EQ:LP2}). The analysis of correctness and complexity of the algorithms goes through unmodified. In particular, the only lemma that refers to $\Rcal$, Lemma~\ref{Lem:Perturbation-hmaxmin}, can be easily generalized to any essentially single-valued, rent monotone, and Pareto indifferent selection of the envy-free set.

Similarly, for a non-empty family of positive affine linear transformations $(g_a)_{a\in C}$ for some $C\subseteq A$, one can calculate an element of
\begin{equation}\underset{(r,\mu)\in F(e)}{\arg\min}\max_{a\in C}g_a(r_a)\label{EQ:minmax-rent}\end{equation}
as folllows. Replace the LP in Algorithm~\ref{alg:Velez1} with
\begin{equation}\begin{array}{rll}\min_{R,r\in\R^A} &R&\\
s.t.:&R\geq \tilde g_a(r_a)&\forall a\in C\\
&V_{i\sigma(i)}-r_{\sigma(i)}\geq V_{i\sigma(j)}-r_{\sigma(j)}&\forall\{i,j\}\subseteq N
\\
&\sum_{a\in A}r_a=\max\{m,m'\},&
\end{array}\label{EQ:LP3}\end{equation}
and replace (\ref{EQ:LP1}) with
\begin{equation}\begin{array}{rll}\min_{R,t^s\in\R^A} &R&\\
s.t.:&t^s_a\leq r^{s-1}_a&\forall a\in A
\\&R\geq g_a(r_a)&\forall a\in C\\
&\tilde u^s_i(t^s_{\sigma^s(i)},\sigma^s(i))\geq \tilde u^s_i(t^s_{\sigma^s(j)},\sigma^s(j))&\forall \{i,j\}\subseteq N
\\
&t^s_a\geq b_i&\forall (i,a)\in SB^u(r^{s-1})
\\
&\sum_{a\in A}t^s_a\geq m&
\end{array}\label{EQ:LP4}\end{equation}
and (\ref{EQ:LP2}) with
\begin{equation}\begin{array}{rll}\max_{r^s\in\R^A} &\sum_{a\in A}r^s_a&\\
s.t.:&r^s_a\geq t^s_a&\forall a\in A
\\&R^s\geq  g_a(r_a)&\forall a\in C\\
&\tilde u^s_i(r^s_{\sigma^s(i)},\sigma^s(i))\geq \tilde u^s_i(r^s_{\sigma^s(j)},\sigma^s(j))&\forall \{i,j\}\subseteq N
\end{array}\label{EQ:LP5}\end{equation}
The above modifications are rather obvious. It is very useful to emphasize why they do work, however. We will encounter next that the modification of the algorithms for the other two families of solutions in the theorem requires some extra thought.

To fix ideas suppose that $C=A$ and the $g_a$s are the identity functions. Thus, the modifications above lead to an allocation in which the maximal rent paid by some agent is minimized in the envy-free set, the minmax rent envy-free solution. Clearly,~(\ref{EQ:LP3}) produces a minmax rent envy-free allocation for the quasi-linear preferences that coincide with $u$ in the range for which all budgets are violated. Thus, it is again a viable seed for Algorithm~\ref{alg:Velez2}. Now consider~(\ref{EQ:LP4}). This LP is minimizing the maximal rent with some constraints for the economies in which rent is at least $m$. In other words, the problem is trying to rebate money  constrained by the limit to rebate up to the point in which aggregate rent is $m$. This is why its solution inches towards our goal. Similarly, LP~(\ref{EQ:LP5}) is trying to increase the aggregate rent to collect constrained by the maximal rent being $R^s$.  This is why its solution rights a possible overshoot by~(\ref{EQ:LP4}). The replication of our analysis with $\Rcal$ is perfunctory because~(\ref{EQ:minmax-rent}) defines an essentially single-valued, rent monotone, Pareto indifferent selection from the envy-free set.

Now  consider again a non-trivial family of positive  affine linear transformations $(f_i)_{i\in S}$ with $S\subseteq N$. Our objective is to calculate an element in
\begin{equation}\underset{(r,\sigma)\in F(e)}{\arg\min}\max_{i\in S}f_i(u_i(r_{\sigma(i)},\sigma(i))).\label{EQ:minmax-f}\end{equation}
A  trivial modification of the LP in Algorithm~\ref{alg:Velez1} in which one sets a minimization problem and replaces the maxmin constraints $R\leq\tilde u_i(\cdot)$ with the minmax constraints $R\geq f_i(\tilde u_i^s(\cdot))$ does produce a minmax envy-free allocation for the range in which budget constraints are violated. However, one finds a hurdle if one trivially transforms~(\ref{EQ:LP1}) into a minmax LP. The issue is that this problem is intended to rebate rent. If the objective becomes to minimize the maximal utility, the modified LP does not try to do this, because the maximal utility becomes lower as rent increases.

Thus, in order to find an element of~(\ref{EQ:minmax-f}) we need to rethink the whole structure of our approach. Instead of initially calculating an allocation for an aggregate rent that is large enough so the budget constraints are violated and then rebate rent, we need to do the opposite: Calculate an allocation for low enough rent so no budget constraint is violated and then increase rent. Modifying and analyzing Algorithm~\ref{alg:Velez1} is again perfunctory. Now that we are recursively increasing rent (last constraint in~(\ref{EQ:LP1})  flips to $\leq$), a minmax version of~(\ref{EQ:LP1}) and a min version of~(\ref{EQ:LP2})  work in the right direction. There are still two problems. First, the lemmas that guide our analysis of Algorithm~\ref{alg:Velez2} are not useful anymore, for they refer to rebates of rent. Second, we need to revise the choice  of the assignment for which these programs must be solved (Line~\ref{EQ:maximum-sigma-velez2} of Algorithm~\ref{alg:Velez2}). These two issues are related because our choice of this assignment is suggested by Lemma~\ref{Lem:Perturbation-hmaxmin}.

Suppose then that starting at some allocation $(r,\sigma)\in F(N,A,u,\sum_{a\in A}r_a)$ we intend to increase rent. Analogously to the economy of rebates and reshuffles, we can define an economy of surcharges and reshuffles as follows.

Given a preferce $u\in\Bcal$ let $\kappa_{ia}(u,r)$ be the absolute value of agent~$i$'s marginal disutility of an increase of rent on room $a$ at $r$ (this is the equivalent to $\lambda_{ia}(u,r)$ for an increase of rent). A surcharge of rent is a vector  $x\in\R^A_{++}$ and a reshuffle is a bijection $\mu:N\rightarrow A$.   At surcharge and reshuffle $(x,\mu)$ at~$r$, agent~$i$ receives bundle $(r_{\mu(i)}+x_{\mu(i)},\mu(i))$. Utility function~$u$ induces a utility function on surcharges and reshuffles at $r$, given by $(x_a,a)\mapsto u_i(r_a+x_a,a)$. Now, suppose that agent $i$ is indifferent between bundles $(r_a,a)$ and $(r_b,b)$. Then her preferences between surcharges with these rooms are negative linear for a neighborhood of zero. That is, for each rebate $x$ small enough (so budget regimes do not change), $(r_a+x_a,a)$ is at least as good as $(r_b+x_b,b)$ if and only if $-\kappa_{ia}(u,r)x_a\geq -\kappa_{ib}(u,r)x_b$.

Note that each agent is indifferent among the best bundles in $\{(r_a,a):a\in A\}$ and this set includes her assignment at $(r,\sigma)$. Each agent may not be indifferent among all bundles in $(r,\sigma)$. Thus the economy of surcharges and reshuffles may not be negative linear. Again, for $K>0$ we can assign utilities $-K(\cdot)$ to the bundles with rooms that are not in the best bundles at $(r,\mu)$, so the economy becomes so. Denote the absolute value of the slopes of these utilities by $\kappa(u,r,K)$.

Let $u\in \Bcal^N$ and $(r,\sigma)\in F(N,A,u,m)$. A $K$-modified surcharges and reshuffles economy for $u$ at $r$ is isomorphic to a quasi-linear rent allocation economy in which preferences are strictly \emph{decreasing} in money, i.e., an economy in which the alternatives space is $\R\times A$ and agents have preferences $(y_a,a)\mapsto \hat u(y_a,a):=-\log \kappa_{ia}(u,r,K)-y_a$. Thus, again we can use all the power of the results for the quasi-linear domain in this economy. In particular, let $\mu$ be an assignment that \emph{minimizes} $\sum_{i\in N}\log\kappa_{i\mu(i)}(u,r,K)$ and let $\varepsilon\in\R$. There is $(y,\mu)\in F(N,A,\hat u,\varepsilon)$ \citep{A-D-G-1991-Eca}. Since the economy is quasi-linear, for each $\eta\in\R$, $((y_a+\eta)_{a\in A},\mu)\in F(N,A,\hat u,\varepsilon+n\eta)$. Thus, by the Intermediate Value Theorem, for each $\delta>0$ we can construct an allocation in the linearized surcharges and reshuffles economy $(y,\mu)\in F(N,A,\hat u,\sum_{a\in A}y_a)$ such that $\sum_{a\in A}\exp(y_a)=\delta$.

Our aim is to guarantee that $(r+\exp(y),\mu)\in F(N,A,u,m+\delta)$. Clearly, this is so if we guarantee two conditions: $(i)$ $\mu$ assigns each agent one room in the best bundles at $(r,\sigma)$; and $(ii)$ $\delta$ is small enough so for each $i\in N$, if $u_i(r_{\sigma(i)},\sigma(i))>u_i(r_{a},a)$, then $u_i(r_{\sigma(i)},\sigma(i))>u_i(r_{a}+\delta,a)$.

Continuity of preferences guarantee that $\delta$ can be chosen so $(ii)$ holds. Again  $K$ can be chosen \emph{large} enough so $(i)$ holds, i.e., a symetric version of Lemma~\ref{Lm:boundLambda} holds. Thus, clearly the following symmetric version of Lemma~\ref{Lem:Perturbation} holds. We denote by $\Fcal^u_\kappa(r)$ the weighted version of $\Fcal(r)$ were for each $(i,a)\in E$, $w(i,a):= \log \kappa_{ia}(u,r)$.

\begin{lemma}[Left perturbation Lemma]\label{Lem:left-Perturbation} Let $(r,\sigma)\in F(N,A,u,m)$. Suppose that  $u$ is piece-wise linear and $\mu$ is a \underline{minimum} weight perfect matching in $\Fcal^u_\kappa(r)$. Then, there is $\varepsilon>0$  such that for each $\delta\in[0,\varepsilon]$, there is $(r^\delta,\mu)\in F(N,A,u,m+\delta)$ such that for each $a\in A$, $r^\delta_a>r_a$.\end{lemma}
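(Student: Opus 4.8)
The plan is to run, in the surcharge direction, the exact mirror of the argument that proves Lemma~\ref{Lem:Perturbation}; indeed the discussion immediately preceding the statement already supplies almost all of it. First I would fix the $K$-modified surcharges and reshuffles economy for $u$ at $r$ and record that it is isomorphic to a quasi-linear rent economy with preferences $(y_a,a)\mapsto\hat u(y_a,a)=-\log\kappa_{ia}(u,r,K)-y_a$ that are strictly \emph{decreasing} in money. Because this economy is quasi-linear, the results of \citet{A-D-G-1991-Eca} apply verbatim: an assignment that maximizes the associated surplus, i.e.\ one that \emph{minimizes} $\sum_{i\in N}\log\kappa_{i\gamma(i)}(u,r,K)$, supports an envy-free allocation $(y,\mu)$ for $\hat u$ at any target aggregate. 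Translation invariance of the quasi-linear structure together with the Intermediate Value Theorem then lets me select, for every small $\delta>0$, an allocation $(y,\mu)\in F(N,A,\hat u,\sum_{a\in A}y_a)$ with $\sum_{a\in A}\exp(y_a)=\delta$, so that the candidate surcharged allocation is $(r+\exp(y),\mu)$.

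Next I would check that $(r+\exp(y),\mu)$ is envy-free for the original $u$ at rent $m+\delta$, with $r^\delta_a>r_a$ for every $a$. As spelled out before the statement, this reduces to two conditions: $(i)$ $\mu$ assigns each agent a bundle among the best bundles at $(r,\sigma)$; and $(ii)$ $\delta$ is small enough that every strict preference at $(r,\sigma)$ survives a surcharge of size at most $\delta$. Condition $(ii)$ is immediate from continuity of the $u_i$, exactly as in the rebate case, and determines the threshold $\varepsilon$ by finitely many strict inequalities; the strict increase $r^\delta_a=r_a+\exp(y_a)>r_a$ is automatic since $\exp(y_a)>0$.

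The one step that is genuinely more than a transcription is condition $(i)$, namely the surcharge analog of Lemma~\ref{Lm:boundLambda}, and I expect it to be the main obstacle. Here the penalty sign flips: non-best edges carry weight $\log K$ and we \emph{minimize} total weight, so I would argue that for $K$ taken \emph{large} enough every minimum-weight assignment must avoid all non-best edges---otherwise redirecting such an edge toward a best bundle strictly lowers the objective by a quantity bounded below independently of $K$, which exceeds any savings once $K$ dominates. Choosing $K$ so large that no minimizer can afford the penalty $\log K$ forces any minimizer of $\sum_{i}\log\kappa_{i\gamma(i)}(u,r,K)$ to be a minimum-weight perfect matching in $\Fcal^u_\kappa(r)$, and conversely makes the given $\mu$ one such minimizer; this is precisely the mirror image of taking $\Lambda$ small in Lemma~\ref{Lm:boundLambda}. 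With $(i)$ and $(ii)$ established, the isomorphism yields $(r+\exp(y),\mu)\in F(N,A,u,m+\delta)$, and letting $\delta$ range over $[0,\varepsilon]$ produces the asserted family of surcharged envy-free allocations with strictly higher rents.
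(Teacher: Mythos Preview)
Your proposal is correct and follows essentially the same approach as the paper: the paper does not give a separate proof of this lemma but instead states that it ``clearly'' follows from the discussion of the $K$-modified surcharges and reshuffles economy preceding it, together with the symmetric version of Lemma~\ref{Lm:boundLambda} (take $K$ large rather than $\Lambda$ small), which is precisely what you carry out. Your identification of the flipped penalty sign and the need to choose $K$ large so that any minimizer of $\sum_i\log\kappa_{i\gamma(i)}(u,r,K)$ is forced to be a perfect matching in $\Fcal(r)$ is exactly the mirror argument the paper alludes to.
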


By working on the surcharges and rebates economy instead of the rebates and reshuffles economy one also proves the following symmetric versions of Lemmas~\ref{Lem:Perturbation-hmaxmin} and~\ref{Lem:ConversePerturbation}. We omit the proof of these results, which can be completed with symmetric arguments to those in our proofs of Lemmas~\ref{Lem:Perturbation-hmaxmin} and~\ref{Lem:ConversePerturbation}.

\begin{lemma}[Monotone left perturbation Lemma]\label{Lem:left-Perturbation-hmaxmin} Let $\Psi$ be an essentially-single valued, rent monotone, Pareto indifferent selection from envy-free set. Suppose that $u\in \Bcal^N$ and $(r,\sigma)\in \Psi(N,A,u,m)$. Let $\mu$ be a minimum weight perfect matching in $\Fcal^u_\kappa(r)$. Then, there is $\varepsilon>0$ such that for each $\delta\in[0,\varepsilon]$ there is $(r^\delta,\mu)\in \Psi(N,A,u,m+\delta)$.
\end{lemma}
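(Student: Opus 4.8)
The plan is to mirror the argument for the Monotone perturbation Lemma (Lemma~\ref{Lem:Perturbation-hmaxmin}), but working in the surcharges and reshuffles economy rather than the rebates economy, so that money \emph{increases} along the constructed path. First I would fix the assignment $\mu$, a minimum weight perfect matching in $\Fcal^u_\kappa(r)$, and invoke the Left Perturbation Lemma (Lemma~\ref{Lem:left-Perturbation}) to obtain $\varepsilon_0>0$ and, for each $\delta\in[0,\varepsilon_0]$, an allocation $(r^\delta,\mu)\in F(N,A,u,m+\delta)$ with $r^\delta_a>r_a$ for every $a\in A$ and $(r^0,\mu)=(r,\mu)$. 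This already produces the surcharge path with strictly increasing room rents; the content of the present lemma is the stronger conclusion that we can keep this path \emph{inside} the selection $\Psi$, not merely inside the envy-free set.

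The key step is to argue that, for $\delta$ small enough, the envy-free allocation delivered by the perturbation coincides with the selection $\Psi(N,A,u,m+\delta)$. I would use the three stated hypotheses on $\Psi$ in turn. By rent monotonicity, the room rents of $\Psi(N,A,u,m+\delta)$ are strictly increasing in $\delta$ and the induced utilities strictly decreasing; so it suffices to show that the specific rents $r^\delta$ I constructed agree with those of $\Psi$. Since $(r,\sigma)\in\Psi(N,A,u,m)$ and $\Psi$ is essentially single-valued, the utility levels $u_i(r_{\sigma(i)},\sigma(i))$ are the unique selection values at rent $m$; by continuity of the perturbation (Lemma~\ref{Lem:left-Perturbation}) and of $\Psi$ along the $m$-axis, the selection values at $m+\delta$ converge to those at $m$ as $\delta\to 0$. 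The crucial observation is that $\mu$ was chosen to minimize $\sum_i\log\kappa_{i\mu(i)}(u,r)$, which (by the symmetric version of Lemma~\ref{Lm:boundLambda} invoked in the text) guarantees $\mu$ assigns each agent a \emph{best} bundle at $(r,\sigma)$; hence for small $\delta$ the allocation $(r^\delta,\mu)$ is Pareto indifferent to an allocation using $\sigma$, and Pareto indifference of $\Psi$ lets me transfer membership across these indifferent allocations.

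Concretely, I would let $(\bar r^\delta,\bar\sigma^\delta)\in\Psi(N,A,u,m+\delta)$ denote the selection allocation and compare it with the perturbed $(r^\delta,\mu)$. Both are envy-free at rent $m+\delta$, both reduce to $(r,\cdot)$ at $\delta=0$, and by essential single-valuedness the \emph{rent vectors} of all $\Psi$-allocations at rent $m+\delta$ coincide. A continuity/uniqueness argument — picking $\varepsilon\le\varepsilon_0$ small enough that no agent's budget regime changes and that the strictly-preferred-bundle gaps at $(r,\sigma)$ are not overturned by a surcharge of size $\delta$ (condition $(ii)$ in the text preceding the lemma) — forces $r^\delta$ to equal the $\Psi$-rent vector, so that $(r^\delta,\mu)\in\Psi(N,A,u,m+\delta)$, possibly after relabeling the assignment to an element of $\Psi$ via Pareto indifference.

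The main obstacle I anticipate is making the identification $r^\delta = \bar r^\delta$ rigorous rather than merely plausible: the Left Perturbation Lemma guarantees \emph{some} envy-free path, but not a priori the one tracing $\Psi$. The leverage comes from the fact that rent monotonicity pins down the $\Psi$-rents as a unique strictly increasing curve through $(r,m)$, while the perturbation also produces a strictly increasing rent path through the same point using an assignment $\mu$ that remains optimal (a best-bundle matching) throughout a neighborhood. Closedness of the relevant welfare set and the Decomposition Lemma of \citet{A-D-G-1991-Eca} — exactly the tools cited for the analogous step in Lemma~\ref{Lem:Perturbation-hmaxmin} — should let me conclude the two curves agree on a right-neighborhood of $\delta=0$. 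Since the statement asserts only existence of such an $\varepsilon>0$ and of the path into $\Psi$, and since these are precisely the symmetric counterparts of facts already established for rebates, I expect the argument to go through by a verbatim sign-flip of the proof of Lemma~\ref{Lem:Perturbation-hmaxmin}, which is why the paper omits the details.
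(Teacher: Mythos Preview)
Your proposal has a genuine gap at the identification step $r^\delta=\bar r^\delta$. You construct one envy-free path $(r^\delta,\mu)$ via the Left Perturbation Lemma and a second path $(\bar r^\delta,\bar\sigma^\delta)$ via $\Psi$, then hope rent monotonicity and the Decomposition Lemma force them to coincide near $\delta=0$. But rent monotonicity only tells you the $\Psi$-rents are a strictly increasing curve through $r$; it does not say this is the \emph{only} strictly increasing envy-free rent path through $r$, and in general it is not. The Left Perturbation Lemma produces \emph{some} envy-free allocation with assignment $\mu$, and there is no uniqueness principle forcing its rent vector to equal the $\Psi$-rent vector. The tools you cite (closedness, Decomposition Lemma) are used in the proof of Theorem~\ref{Thm:analysis-alg-vel-2}, not in the proof of Lemma~\ref{Lem:Perturbation-hmaxmin}, and they do not perform a curve-identification of this kind.

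The paper's argument (symmetric to the actual proof of Lemma~\ref{Lem:Perturbation-hmaxmin}) runs in the opposite direction and avoids this difficulty entirely. One does not start from the perturbation path and try to land in $\Psi$; one starts from a $\Psi$-allocation $(\bar r^\delta,\gamma)$ at $m+\delta$ and shows the assignment can be \emph{replaced} by $\mu$. The steps are: (i) for $\delta$ small, $\gamma$ must be a perfect matching in $\Fcal(r)$ and no budget regime changes; (ii) by the symmetric version of Lemma~\ref{Lem:induceFinlinear}, the surcharge vector $\bar r^\delta-r$ is envy-free in the $K$-linearized surcharges economy with assignment $\gamma$; (iii) since that economy is quasi-linear, the optimal assignment $\mu$ (a minimum-weight matching in $\Fcal^u_\kappa(r)$) can substitute $\gamma$ there; (iv) mapping back (the symmetric Step~1), $(\bar r^\delta,\mu)\in F(N,A,u,m+\delta)$; (v) by \citet[Lemma~3]{A-D-G-1991-Eca} each agent's utility is unchanged, so by Pareto indifference $(\bar r^\delta,\mu)\in\Psi(N,A,u,m+\delta)$. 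The key idea you are missing is this assignment-substitution via the linearized economy, which is precisely what Lemma~\ref{Lem:induceFinlinear} and the quasi-linear exchangeability result buy.
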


\begin{lemma}[Converse left perturbation lemma]\label{Lem:left-ConversePerturbation}Let $u\in\Bcal^N$, $\varepsilon>0$, $(r,\sigma)\in F(N,A,u,m)$ and $(t,\sigma)\in F(N,A,u,m-\varepsilon)$ such that for each $a\in A$, $r_a>t_a$. Suppose that there is no $i\in N$ and $a\in A$ such that, $r_a>b_i>t_a$. Then, $\sigma$ is a minimal weight perfect matching in $\Fcal^u_\kappa(t)$.
\end{lemma}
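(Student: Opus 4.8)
The plan is to mirror the proof of Lemma~\ref{Lem:ConversePerturbation}, but running the argument at the \emph{lower} allocation $t$ with the surcharge slopes $\kappa$ in place of the \emph{higher} allocation $r$ with the rebate slopes $\lambda$. Set $x_a := r_a - t_a > 0$ for each $a\in A$; this is the surcharge that carries $(t,\sigma)$ up to $(r,\sigma)$. To show $\sigma$ is a minimal weight perfect matching in $\Fcal^u_\kappa(t)$ I would invoke the standard alternating-cycle characterization: since the symmetric difference of $\sigma$ with any other perfect matching of $\Fcal(t)$ decomposes into vertex-disjoint $\sigma$-alternating cycles, it suffices to show that along every such cycle the total weight of the $\sigma$-edges is no larger than that of the alternative edges. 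Concretely, for a cycle on agents $i_1,\dots,i_p$ with $a_k := \sigma(i_k)$ and alternative edges $(i_k,a_{k-1})$ (indices mod $p$), the claim reduces to $\prod_k \kappa_{i_k a_k}(u,t) \leq \prod_k \kappa_{i_k a_{k-1}}(u,t)$.

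The key step, and the only one that uses the hypotheses, is to turn the surcharge into an exact affine identity. Because there is no pair $(i,a)$ with $r_a > b_i > t_a$, the kink of $u_i(\cdot,a)$ at $b_i$ never lies strictly inside $[t_a,r_a]$, so $u_i(\cdot,a)$ is affine on this whole segment. The no-crossing condition leaves exactly two admissible cases: $b_i \leq t_a$ (segment entirely in the above-budget region, marginal disutility $1+\rho_i$) and $b_i \geq r_a$ (segment entirely in the below-budget region, marginal disutility $1$). In both cases this common value equals the marginal disutility of a surcharge at $t$, namely $\kappa_{ia}(u,t)$. The boundary equalities $b_i = t_a$ and $b_i = r_a$ cause no trouble, because $\kappa_{ia}(u,t)$ is by definition the marginal disutility of an \emph{increase} of rent at $t$, i.e. it reads off the regime just above $t_a$, which is precisely the regime governing the entire segment. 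Hence, for every $i$ and $a$, $u_i(r_a,a) = u_i(t_a,a) - \kappa_{ia}(u,t)\,x_a$.

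With this identity in hand I would combine, on each cycle edge, the indifference $u_{i_k}(t_{a_k},a_k) = u_{i_k}(t_{a_{k-1}},a_{k-1})$ (because $(i_k,a_{k-1})\in\Fcal(t)$) with the no-envy inequality $u_{i_k}(r_{a_k},a_k) \geq u_{i_k}(r_{a_{k-1}},a_{k-1})$ (because $(r,\sigma)\in F(N,A,u,m)$). Substituting the affine identity into the inequality and cancelling the coincident $u_{i_k}(t_\cdot,\cdot)$ terms via the indifference yields $\kappa_{i_k a_k}(u,t)\,x_{a_k} \leq \kappa_{i_k a_{k-1}}(u,t)\,x_{a_{k-1}}$ for each $k$. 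Taking the product over the cycle and noting that each room occurs once as some $a_k$ and once as some $a_{k-1}$, the strictly positive factors $\prod_k x_{a_k} = \prod_k x_{a_{k-1}}$ cancel, leaving $\prod_k \kappa_{i_k a_k}(u,t) \leq \prod_k \kappa_{i_k a_{k-1}}(u,t)$, which in logarithmic weights is exactly the required cycle inequality. As this holds for every $\sigma$-alternating cycle, $\sigma$ is a minimal weight perfect matching in $\Fcal^u_\kappa(t)$.

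I expect the main obstacle to be the slope bookkeeping in the second step: one must verify carefully that the no-crossing hypothesis really forces $u_i(\cdot,a)$ to be affine on the whole of $[t_a,r_a]$ with constant equal to $\kappa_{ia}(u,t)$ (and not $\lambda_{ia}$, nor some blend across a regime change), paying attention to the endpoint cases $b_i\in\{t_a,r_a\}$ where the kink sits exactly at a boundary. Everything else—the alternating-cycle reduction and the cancellation of the $x$-factors—is the same manipulation as in Lemma~\ref{Lem:ConversePerturbation}, with the inequality simply reversed because here we evaluate slopes at the lower endpoint $t$ and move rent upward rather than downward.
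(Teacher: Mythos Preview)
Your argument is correct. The affine identity $u_i(r_a,a)=u_i(t_a,a)-\kappa_{ia}(u,t)\,x_a$ does follow from the no-crossing hypothesis (your endpoint analysis for $b_i\in\{t_a,r_a\}$ is right because $\kappa$ is the right derivative), and the alternating-cycle telescoping then gives the weight inequality needed for minimality in $\Fcal^u_\kappa(t)$.

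However, your closing remark that ``the alternating-cycle reduction \dots\ is the same manipulation as in Lemma~\ref{Lem:ConversePerturbation}'' misreads the paper. The paper's proof of Lemma~\ref{Lem:ConversePerturbation} does \emph{not} argue via alternating cycles; it goes through Lemma~\ref{Lem:induceFinlinear}: the no-crossing hypothesis is used to show that $(\log(r-t),\sigma)$ is envy-free in an auxiliary quasi-linear economy with values $\log\lambda_{ia}(u,r,\Lambda)$, and then Svensson's theorem that any envy-free assignment maximizes total value yields that $\sigma$ is a maximal weight perfect matching. The intended proof of Lemma~\ref{Lem:left-ConversePerturbation} is the symmetric version of this: build the $K$-modified surcharges economy at $t$, show $(\log(r-t),\sigma)$ is envy-free for $\hat u_i(y_a,a)=-\log\kappa_{ia}(u,t,K)-y_a$, and conclude by Svensson that $\sigma$ minimizes $\sum_i\log\kappa_{i\sigma(i)}$. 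Your route bypasses the auxiliary economy, the choice of $K$, and the appeal to Svensson, replacing them with the standard no-negative-alternating-cycle criterion for minimum weight matchings; it is more self-contained, while the paper's route is shorter once the machinery (the analogue of Lemma~\ref{Lem:induceFinlinear}) is in place.
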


The modification of Algorithm~\ref{alg:Velez2} is completed by selecting in Line~\ref{EQ:maximum-sigma-velez2} a minimal weight perfect matching in $\Fcal^u(r^{s-1})$. The analysis follows then from similar arguments based on Lemmas~\ref{Lem:left-Perturbation-hmaxmin}  and~\ref{Lem:left-ConversePerturbation}.

Finally, the modification to compute an element of a maxmin rent  envy-free allocation (third family in the theorem) is then perfunctory.

\section{Discussion}\label{Sec:Discussion}

\textbf{Incentives}. A relevant question is the extent to which the algorithms we construct are manipulable. The incentives of envy-free rent division algorithms are relatively well understood. First, no such algorithm is  dominant strategies incentive compatible. This goes back to \citet{Green-Lafont-1979}, for envy-free rent allocations are budget balanced and Pareto efficient. Strikingly, for each mechanism that admits quasi-linear reports, there is at most one profile of preferences in which each agent's true report is a dominant strategy \citep{Velez-2017-Survey}. Better news are obtained for complete information non-cooperative incentives. If preference reports are required to be quasi-linear, the non-cooperative outcomes (limit Nash equilibrium outcomes) from the complete information manipulation games induced by any envy-free allocation algorithm are exactly the envy-free allocations for the true preference profile \citep{Velez-2015-JET,Velez-2017-Survey}. If agents are allowed to report arbitrary continuous preferences, and one had the means to calculate an envy-free allocation for such reports, there may be inefficient allocations that result as non-cooperative outcomes for this allocation process \citep{Velez-2015-JET}. This problem arises only when agents can report arbitrarily large marginal disutility of paying rent, however. If preferences are required to be piece-wise linear budget constrained with an upper bound on the marginal disutility of paying rent, the incentives in the quasi-linear domain are preserved. That is, when reports are required to be in $\Bcal$, the non-cooperative outcomes from the complete information manipulation games induced by any envy-free allocation algorithm are exactly the envy-free allocations for the true preference profile \citep{Velez-Axiv-BudgetIncentives}.

\textbf{Elicitation and calibration}. One can construct elicitation schemes for the preference parameters $(v^i_a)_{a\in A}$, $b_i$ and $\rho_i$ based on a strict interpretation of this domain as representing the actual preferences of the agents \citep{Velez-Axiv-BudgetIncentives}. In practice, one may want to deploy a version of this model in which agents are asked for their financial constraints on a coarse scale, say low, medium, and high. Since no-envy can be tested ex-post, it is plausible that one can calibrate values for these reports (based on experimental or field data) to maximize the performance of these mechanisms. These are open questions that are left for future research.

\bibliographystyle{ACM-Reference-Format}
\bibliography{ref-expressive-domains}

\section{Appendix}

\begin{proof}[\textit{Proof of Lemma~\ref{Lm:boundLambda}}]Let $\Lambda>0$ be such that for each $a\in A$ such that\linebreak $u_i(r_{\sigma(i)},\sigma(i))>u_i(r_a,a)$,
$\log \Lambda+(n-1)\max_{i\in N,a\in A}\lambda_{ia}<\sum_{i\in N}\lambda_{i\sigma(i)}$.
\end{proof}

The following lemma plays a key role in the proof of Lemmas~\ref{Lem:Perturbation-hmaxmin} and~\ref{Lem:ConversePerturbation}.

\begin{lemma}\label{Lem:induceFinlinear} Let $u\in\Bcal^N$, $\varepsilon>0$, $(r,\gamma)\in F(N,A,u,m)$, and $(t,\sigma)\in F(N,A,u,m-\varepsilon)$ such that $\sigma$ is a perfect matching in $\Fcal(r)$ and for each $a\in A$, $r_a>t_a$. Suppose that there is no $i\in N$ and $a\in A$ such that, $r_a>b_i>t_a$. Then, there is $\Lambda>0$ satisfying the property of Lemma~\ref{Lm:boundLambda} for $r$, such that
\[(r-t,\sigma)\in F\left(N,A,(\lambda_{ia}(u,r,\Lambda)(\cdot))_{i\in N,a\in A},\sideset{}{_{a\in A}}\sum r_a-t_a\right).\]
\end{lemma}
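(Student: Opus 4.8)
The plan is to verify directly the no-envy inequalities of the $\Lambda$-modified rebates economy for the rebate vector $x := r - t$ (which lies in $\R^A_{++}$ since $r_a > t_a$ for every $a$) together with the assignment $\sigma$, and then to pick $\Lambda$ small enough that this verification and the requirement of Lemma~\ref{Lm:boundLambda} hold simultaneously. Note first that $\sum_{a\in A} x_a = m-(m-\varepsilon)=\varepsilon=\sum_{a\in A}r_a-t_a$, so $(x,\sigma)$ is a feasible allocation for the stated total. Since $\sigma$ is a perfect matching in $\Fcal(r)$ and $(r,\gamma)\in F(N,A,u,m)$, the pair $(r,\sigma)$ is itself envy-free, with $u_i(r_{\sigma(i)},\sigma(i))=u_i(r_{\gamma(i)},\gamma(i))\geq u_i(r_a,a)$ for every $a$; in particular $(i,a)\in\Fcal(r)$ exactly when $u_i(r_{\sigma(i)},\sigma(i))=u_i(r_a,a)$, which is precisely the branch used to define $\lambda_{ia}(u,r,\Lambda)$.

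The key enabling step is to promote the local slopes $\lambda_{ia}(u,r)$ --- which a priori describe $u_i(\cdot,a)$ only for rents just below $r_a$ --- to the entire interval $[t_a,r_a]$. Here the hypothesis that no $i,a$ satisfy $r_a>b_i>t_a$ is exactly what is needed: it forces $b_i\notin(t_a,r_a)$, so the single kink of $r_a'\mapsto u_i(r_a',a)=v^i_a-r_a'-\rho_i\max\{0,r_a'-b_i\}$ never lies strictly inside $[t_a,r_a]$. Consequently $u_i(\cdot,a)$ is affine on $[t_a,r_a]$ with slope $-\lambda_{ia}(u,r)$; checking the two regimes $b_i\geq r_a$ and $b_i\leq t_a$, which together cover all cases, each reproduces the value of $\lambda_{ia}(u,r)$ prescribed by $SB^u(r)$. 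This yields $u_i(t_a,a)=u_i(r_a,a)+\lambda_{ia}(u,r)\,x_a$ for every $i$ and $a$, and the same identity for $a=\sigma(i)$.

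With this identity in hand I would split the required no-envy inequality $\lambda_{i\sigma(i)}(u,r,\Lambda)\,x_{\sigma(i)}\geq\lambda_{ia}(u,r,\Lambda)\,x_a$ into two cases. If $(i,a)\in\Fcal(r)$, then both modified slopes equal the true slopes (as $(i,\sigma(i))$ is always an edge), and substituting the identity into the envy-freeness of $(t,\sigma)$, namely $u_i(t_{\sigma(i)},\sigma(i))\geq u_i(t_a,a)$, cancels the equal constants $u_i(r_{\sigma(i)},\sigma(i))=u_i(r_a,a)$ and leaves exactly $\lambda_{i\sigma(i)}(u,r)\,x_{\sigma(i)}\geq\lambda_{ia}(u,r)\,x_a$, as wanted. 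If instead $(i,a)\notin\Fcal(r)$, then $\lambda_{ia}(u,r,\Lambda)=\Lambda$, and since $\lambda_{i\sigma(i)}(u,r)\geq 1$ and $x_{\sigma(i)}>0$ are fixed positive quantities while $x_a$ is fixed, the inequality $\lambda_{i\sigma(i)}(u,r)\,x_{\sigma(i)}\geq\Lambda\,x_a$ holds for all sufficiently small $\Lambda$.

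Finally I would set $\Lambda:=\min\{\Lambda_0,\Lambda_1\}$, where $\Lambda_0$ is any threshold furnished by Lemma~\ref{Lm:boundLambda} and $\Lambda_1:=\min\{\lambda_{i\sigma(i)}(u,r)\,x_{\sigma(i)}/x_a:(i,a)\notin\Fcal(r)\}>0$ is the bound extracted from the second case. Because the property of Lemma~\ref{Lm:boundLambda} holds for every positive value at most $\Lambda_0$, this $\Lambda$ still satisfies it, and by construction it makes every no-envy inequality hold; hence $(r-t,\sigma)$ is envy-free in the $\Lambda$-modified rebates economy. The one genuine subtlety --- the main obstacle --- is the interval-linearization step: one must check that the no-crossing hypothesis delivers constancy of the slope across the whole finite rebate $x_a$, not merely infinitesimally, since this is what lets the static coefficients $\lambda_{ia}(u,r)$ govern a rebate of arbitrary size; the case analysis and the choice of $\Lambda$ are then routine.
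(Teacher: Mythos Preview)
Your proof is correct and follows essentially the same route as the paper's: both arguments split the no-envy inequality for $(r-t,\sigma)$ in the $\Lambda$-modified economy into the case $(i,a)\in\Fcal(r)$, where the inequality is obtained by subtracting the equal values $u_i(r_{\sigma(i)},\sigma(i))=u_i(r_a,a)$ from the envy-freeness of $(t,\sigma)$ and invoking the no-crossing hypothesis to linearize, and the case $(i,a)\notin\Fcal(r)$, where the inequality is forced by choosing $\Lambda$ small enough. Your write-up is in fact a bit more explicit than the paper's about why the no-crossing hypothesis makes $u_i(\cdot,a)$ affine on the whole interval $[t_a,r_a]$ with the correct slope $\lambda_{ia}(u,r)$, which is the one point the paper leaves implicit.
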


\begin{proof}[\textit{Proof of Lemma~\ref{Lem:induceFinlinear}}]Since $\sigma$ is a perfect matching in $\Fcal(r)$, $(r,\sigma)\in F(N,A,u,m)$.
Let $\Lambda>0$, $i\in N$, and $a\in A$ such that $u_i(r_{\sigma(i)},\sigma(i))=u_i(r_a,a)$. Thus, $\lambda_{ia}(u,r,\Lambda)=\lambda_{ia}(u,r)$. Since $(t,\sigma)\in F(N,A,u,m-\varepsilon)$, for each $\{i,j\}\subseteq N$, $u_i(t_{\sigma(i)},\sigma(i))\geq u_i(t_a,a)$. Thus, $u_i(t_{\sigma(i)},\sigma(i))-u_i(r_{\sigma(i)},\sigma(i))\geq u_i(t_a,a)-u_i(r_a,a)$. Since there is no $i\in N$ and $a\in A$ such that, $r_a>b_i>t_a$, this inequality can be written as $\lambda_{i\sigma(i)}(u,r)(r_{\sigma(i)}-t_{\sigma(i)})\geq \lambda_{ia}(u,r)(r_{a}-t_{a})$. Thus, $\lambda_{i\sigma(i)}(u,r,\Lambda)(r_{\sigma(i)}-t_{\sigma(i)})\geq \lambda_{ia}(u,r,\Lambda)(r_{a}-t_{a})$. Now, for each $i\in N$ and $a\in A$ such that $u_i(r_{\sigma(i)},\sigma(i))>u_i(r_a,a)$, $\lambda_{ia}(u,r,\Lambda)=\Lambda$. Thus, for such $i$ and $a$, $\lambda_{i\sigma(i)}(u,r,\Lambda)(r_{\sigma(i)}-t_{\sigma(i)})\geq \lambda_{ia}(u,r,\Lambda)(r_{a}-t_{a})$ if and only if $\lambda_{i\sigma(i)}(u,r)(r_{\sigma(i)}-t_{\sigma(i)})\geq \Lambda(r_{a}-t_{a})$. Since for each $i\in N$, $\lambda_{i\sigma(i)}(u,r)>0$  and for each $b\in A$, $\lambda_{i\sigma(i)}(u,r)>0$ and $r_b-t_b>0$, one can select $\Lambda>0$ satisfying the property of Lemma~\ref{Lm:boundLambda} and for which all these inequalities are also satisfied.
\end{proof}

\begin{proof}[\textit{Proof of Lemma~\ref{Lem:Perturbation-hmaxmin}}]
Consider a rebate and reshuffle at $(r,\sigma)$, i.e.,  a vector $x:=(x_a)_{a\in A}\in\R^A_{++}$ and an assignment $\mu:N\rightarrow A$ that induce allocation $(r-x,\mu)$. Fix  $\Lambda$ satisfying the property in Lemma~\ref{Lm:boundLambda} for $r$. For each $i\in N$, let $\hat u_i$ be the function $(y_a,a)\in\R\times A\mapsto \hat u_i(y_a,a):=\log\lambda_{ia}(u,r,\Lambda)+y_a$.

Let $\varepsilon>0$ be such that (i) for each  $i\in N$ and each $a\in A$ such that $u_i(r_{\sigma(i)},\sigma(i))>u_i(r_a,a)$, we have that $u_i(r_{\sigma(i)},\sigma(i))>u_i(r_a-2\varepsilon,a)$; and (ii) for each $(i,a)\in SB^u(r)$, $r_a-\varepsilon>b_i$. This $\varepsilon$ is well defined because preferences $u$ are continuous.

Fix $\delta\in(0,\varepsilon]$.

\textbf{\textit{Step 1}}: Let $(y^\delta,\gamma)\in F(N,A,\hat u,\sum_{a\in A}y^\delta_a)$ be such that $\sum_{a\in A}\exp(y^\delta_a)=\delta$.  Let $r^\delta:= (r_a-\exp(y^\delta_a))_{a\in A}$.  We claim that $(r^\delta,\gamma)\in F(N,A,u,m-\delta)$.

For simplicity, for each $i\in N$, let $\lambda_{ia}:=\lambda_{ia}(u,r)$ and $\hat\lambda_{ia}:= \lambda_{ia}(u,r,\Lambda)$. Since $\sum_{a\in A}\exp(y^\delta_a)=\delta$, and $\sum_{a\in A}r_a=m$, then $\sum_{a\in A}r^\delta_a=m-\delta$. Since $\sum_{a\in A}\exp(y^\delta_a)=\delta$, for each $a\in A$, $\exp(y^\delta_a)<\delta\leq\varepsilon$. Since  $\hat u$ is quasi-linear and $\gamma$ admits an envy-free allocation for an economy with preferences $\hat u$, $\gamma$  maximizes the summation of values for $\hat u$ \citep{Svensson-1983-Eca}. By Lemma~\ref{Lm:boundLambda}, $\gamma$ is a perfect matching in $\Fcal(r)$. Thus, $u_i(r_{\gamma(i)},\gamma(i))=u_i(r_{\sigma(i)},\sigma(i))$. Thus, for each $a\in A$ such that $u_i(r_{\sigma(i)},\sigma(i))>u_i(r_{a},a)$, we have that $u_i(r_{\gamma(i)},\gamma(i))=u_i(r_{\sigma(i)},\sigma(i))>u_i(r_{a}-\varepsilon,a)\geq u_i(r_{a}-\delta,a)$. Thus, $u_i(r_{\gamma(i)}-\exp(y^\delta_{\gamma(i)}),\gamma(i))> u_i(r_{a}-\delta,a)>u_i(r_{a}-y^\delta_{a},a)$.
Let  $A\in A$ be such that $u_i(r_{\sigma(i)},\sigma(i))=u_i(r_{a},a)$. Since both $(r_{\gamma(i)},\gamma(i))$ and $(r_a,a)$ maximize $u_i$ among the bundles in $(r,\sigma)$, we have that $\hat \lambda_{i\gamma(i)}=\lambda_{i\gamma(i)}$ and $\hat \lambda_{ia}=\lambda_{ia}$. Since $u_i(r_{\gamma(i)},\gamma(i))=u_i(r_a,a)$, $u_i(r_{\gamma(i)}-y^\delta_{\gamma(i)},\gamma(i))\geq u_i(r_{a}-y^\delta_{a},a)$ if and only if $\hat \lambda_{i\gamma(i)}\exp(y^\delta_{\gamma(i)})\geq \hat \lambda_{ia}\exp(y^\delta_a)$. This happens if and only if $\log\hat\lambda_{i\gamma(i)}+y^\delta_{\gamma(i)}\geq \log\hat\lambda_{ia}+y^\delta_a$. Now, $\log\hat\lambda_{i\gamma(i)}+y^\delta_{\gamma(i)}\geq \log\hat\lambda_{ia}+y^\delta_a$ holds because $(y^\delta,\gamma)\in F(N,A,\hat u,\sum_{a\in y_a})$. Thus, $u_i(r_{\gamma(i)}-\exp(y^\delta_{\gamma(i)}),\gamma(i))\geq u_i(r_{a}-\exp(y^\delta_{a}),a)$. Thus, for each pair $\{i,j\}\subseteq N$, $u_i(r^\delta_{\gamma(i)},\gamma(i))\geq u_i(r^\delta_{\gamma(j)},\gamma(j))$. Thus, $(r^\delta,\gamma)\in F(N,A,u,m-\delta)$.

\medskip
\textbf{\textit{Step 2}}: Let $(r^\delta,\gamma)$ be a solution to
\begin{equation}\max_{(r^\delta,\gamma)\in F(N,A,u,m-\delta)}\min_{i\in N}u_i(r^\delta_{\gamma(i)},\gamma(i)).\label{Eq:Maxmin-hl-eq1}\end{equation}
We claim that for each maximal weight perfect matching in $\Fcal^u(r)$, $(r^\delta,\mu)$ is a solution to~(\ref{Eq:Maxmin-hl-eq1}). Since $(r,\sigma)\in \Rcal(N,A,u,m)$, for each $a\in A$, $r^\delta_a<r_a$ \citep{A-D-G-1991-Eca,Velez-2016-Rent}.  Thus, for each $a\in A$, $r_a-r^\delta_1<\delta\leq\varepsilon$. By our choice of $\varepsilon$, there is no $i\in N$ and $a\in A$ such that, $r_a>b_i>r^\delta_a$. We prove that $\gamma$ is a perfect matching in $\Fcal(r)=(N,A,E)$. Suppose by contradiction that there is $i\in N$ and $a\in A$, such that $(i,a)\not\in E$. Then, $u_i(r_{\gamma(i)},\gamma(i))\neq u_i(r_{\sigma(i)},\sigma(i))$. Since $(r,\sigma)\in F(N,A,u,m)$, $u_i(r_{\gamma(i)},\gamma(i))<u_i(r_{\sigma(i)},\sigma(i))$. By our choice of $\varepsilon$, $u_i(r^\delta_{\sigma(i)},\sigma(i))>u_i(r_{\sigma(i)},\sigma(i))>u_i(r_{\gamma(i)}-2\varepsilon,{\gamma(i)})\geq u_i(r_{\gamma(i)}^\delta,{\gamma(i)})$. Since $\gamma(i)\neq \sigma(i)$, $(r^\delta,\gamma)\not\in F(N,A,u,m-\delta)$. This contradicts $(r^\delta,\gamma)$ is a solution to~(\ref{Eq:Maxmin-hl-eq1}). By Lemma~\ref{Lem:induceFinlinear},  we can further select $\Lambda>0$ satisfying the property of Lemma~\ref{Lm:boundLambda} for $r$ and such that
\[(r-r^\delta,\gamma)\in F\left(N,A,(\lambda_{ia}(u,r,\Lambda)(\cdot))_{i\in N,a\in A},\sideset{}{_{a\in A}}\sum r_a-r^\delta_a\right).\]
Thus,
\[(\log(r_a-r^\delta_a)_{a\in A},\gamma)\in F\left(N,A,\hat u,\sideset{}{_{a\in A}}\sum \log(r_a-r^\delta_a)\right),\]
and for each $\mu'$ that is a solution to
\begin{equation}
\max_{\gamma':N\rightarrow A,\ \gamma'\textrm{ a bijection}}\sum_{i\in N}\log(\lambda_{i\gamma'(i)}(u,r,\Lambda)) ,\label{Eq:opt-assignment}\end{equation}
we have that $\mu'$ is a perfect matching in $\Fcal(r)$ (this is the property of Lemma~\ref{Lm:boundLambda} for $r$). Since $\mu$ is a maximal weight perfect matching in $\Fcal^u(r)$, $\mu$ is also a solution to~(\ref{Eq:opt-assignment}). Since each optimal assignment in a quasi-linear economy can substitute the assignment in any envy-free allocation preserving no-envy \citep{Svensson-2009-ET,Gal-et-al-2016-JACM} (see \citep{Procaccia-Velez-Yu-2018-AAAI} for a short proof),
\[(\log(r_a-r^\delta_a)_{a\in A},\mu)\in F\left(N,A,\hat u,\sideset{}{_{a\in A}}\sum \log(r_a-r^\delta_a)\right).\]
Since $\sum_{a\in A}\exp(\log(r_a-r^\delta_a))=\delta$, by Step 1, $(r^\delta,\mu)\in F(N,A,u,m-\delta)$. By \citet[Lemma 3,][]{A-D-G-1991-Eca}, for each $i\in N$, $u_i(r^\delta_{\mu(i)},\mu(i))=u_i(r^\delta_{\gamma(i)},\gamma(i))$. Thus, $\min_{i\in N}u_i(r^\delta_{\mu(i)},\mu(i))=\min_{l\in L}u(r^\delta_{\gamma(i)},\gamma(i))$. Thus, $(r^\delta,\mu)$ is a solution to~(\ref{Eq:Maxmin-hl-eq1}).

\medskip
\textbf{\textit{Step 3}}: Concludes. Let $\mu$ be a maximal weight perfect matching in $\Fcal^u(r)$. Consider the function $\delta\in[0,\varepsilon]\mapsto(r^\delta,\mu)$ where  $(r^0,\mu)=(r,\mu)$ and for each $\delta\in(0,\varepsilon]$, $(r^\delta,\mu)$ is a solution to~(\ref{Eq:Maxmin-hl-eq1}). Since $\mu$ is a perfect matching in $\Fcal(r)$, $(r,\mu)\in F(N,A,u,m)$. By \citet[Lemma 3,][]{A-D-G-1991-Eca}, for each $i\in N$, $u_i(r_{\mu(i)},\mu(i))=u_i(r_{\sigma(i)},\sigma(i))$. Since $(r,\sigma)\in \Rcal(N,A,u,m)$, we have that $(r,\mu)\in \Rcal(N,A,u,m)$. Thus, for each $\delta\in[0,\varepsilon]$, $(r^\delta,\mu)\in  \Rcal(N,A,u,m-\delta)$. Thus, for each pair $0<\delta<\eta<\varepsilon$, and each $i\in N$,  $u_i(r^\eta_{\mu(i)},\mu(i))>u_i(r^\delta_{\mu(i)},\mu(i))$, and for each $a\in A$, $r^\delta_a>r^\eta_a$ \citep{A-D-G-1991-Eca,Velez-2016-Rent}.
\end{proof}

\begin{proof}[\textit{Proof of Lemma~\ref{Lem:ConversePerturbation}}]Consider a rebate and reshuffle at $(r,\sigma)$, i.e.,  a vector $x:=(x_a)_{a\in A}\in\R^A_{++}$ and an assignment $\mu:N\rightarrow A$ that induce allocation $(r-x,\mu)$. Fix  $\Lambda$ satisfying the property in Lemma~\ref{Lm:boundLambda} for $r$. For each $i\in N$, let $\hat u_i$ be the function $(y_a,a)\in\R\times A\mapsto \hat u_i(y_a,a):=\log\lambda_{ia}(u,r,\Lambda)+y_a$. Since $(r,\sigma)\in F(N,A,u,m)$, $\sigma$ is a perfect matching in $\Fcal(r)$. By Lemma~\ref{Lem:induceFinlinear} one can select $\Lambda$ such that $(\log(r-t),\sigma)\in F(N,A,\hat u,\sum_{a\in A}\log(r_a-t_a))$. Since $\hat u$ is quasi-linear, $\sigma$ maximizes the summation of the values for $\hat u$ \citep{Svensson-1983-Eca}. Thus, $\sigma$ is a maximal weight perfect matching in $\Fcal^u(r)$.
\end{proof}

\end{document}